\newtheorem{theorem}{Theorem}
\newtheorem{lemma}{Lemma}
\newtheorem{corollary}{Corollary}
\newtheorem{observation}{Observation}
\newcommand{\R}{\mathbb R}
\newcommand{\bigc}{\mathcal C}
\newcommand{\bigt}{\mathcal T}
\newcommand{\bigw}{\mathcal W}
\newcommand{\tr}{{\mathcal T}} 
\renewcommand{\bar}[1]{\textstyle\widetilde{\scriptstyle{#1}}}
\newcommand{\closure}{\mathrm{Cl}}
\title{Decomposition of Multiple Coverings into More Parts}
\author{Greg Aloupis\thanks{Universit\'e Libre de Bruxelles (ULB), CP212, Bld. du Triomphe, 1050 Bruxelles, Belgium. Supported by the \emph{Communaut\'e Fran\c caise de Belgique}. E-mail: {\tt \{greg.aloupis,jcardin,secollet,slanger\}@ulb.ac.be}.} \and Jean Cardinal\footnotemark[1] \and  S\'ebastien Collette\footnotemark[1]~\thanks{Charg\'e de Recherches du FRS-FNRS.} \and Stefan Langerman\footnotemark[1]~\thanks{Chercheur Qualifi\'e du FRS-FNRS.} \and David Orden\thanks{Universidad de Alcal\'a, Spain. E-mail: {\tt \{david.orden,pedro.ramos\}@uah.es}.}  \and Pedro Ramos\footnotemark[4]}
\date{}
\begin{document}
\maketitle
\sloppy

\begin{abstract}
We prove that for every centrally symmetric convex polygon $Q$, there exists a constant $\alpha$ such that any $\alpha k$-fold covering of the plane by
translates of $Q$ can be decomposed into $k$ coverings. This improves on a quadratic upper bound proved by Pach and T\'oth (SoCG'07). The question is motivated by a sensor network problem, in which a region has to be monitored by sensors with limited battery lifetime.
\end{abstract}

\section{Introduction}

A collection of subsets of the plane forms a {\em $f$-fold covering} if any point in the plane is covered by at least $f$ subsets. We consider the following problem (see Figure~\ref{fig:decomp}):\\
{\em Given a convex planar body $Q$, does there exist a function $f(Q,k)$ such that any $f(Q,k)$-fold covering of the plane by translates of $Q$ can be decomposed into $k$ disjoint (1-fold) coverings?}\medskip

This problem, first raised by Pach and T\'oth in 1980 (see~\cite{Pa80} and references therein), is  a classical question in discrete geometry and remains largely open. In fact it is not even known whether there exists a constant $c$ such that any $c$-fold covering  can be decomposed into {\em two} coverings. A survey of the literature can be found in the book of Brass, Moser, and Pach~\cite{RPDG}.\medskip

\begin{figure}[htb]
\begin{center}
\includegraphics[scale=.5,angle=-90]{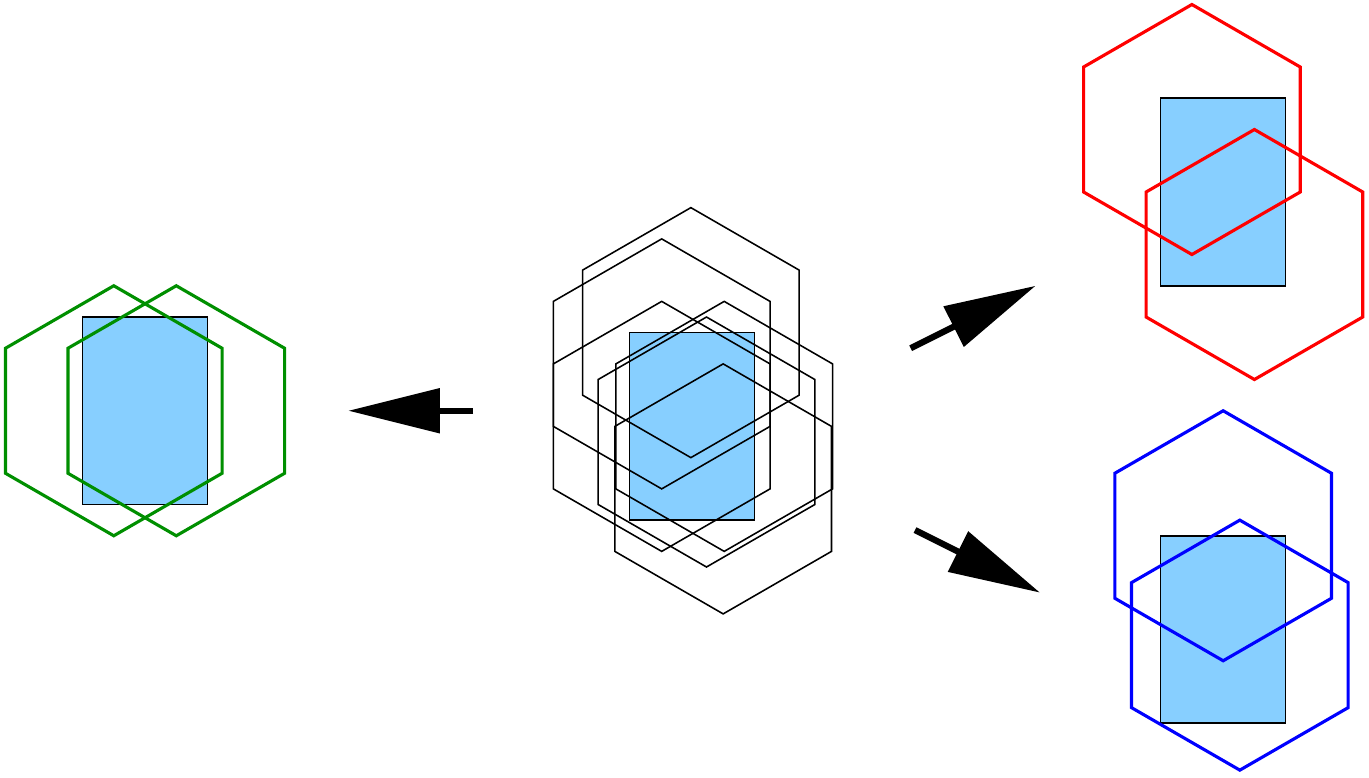}
\end{center}
\caption{\label{fig:decomp}A $3$-fold covering of a rectangle by hexagons that can be decomposed into three coverings.}
\end{figure}

Mani and Pach proved that 33-fold coverings by unit disks can be decomposed into two coverings~\cite{MP86}. Tardos and T\'{o}th recently proved that, for triangles, any 43-fold covering can be decomposed into two coverings~\cite{TT07}. 

For the case of centrally symmetric convex polygons, the problem proved to be challenging. The existence of a function $f(Q,k)$
was conjectured in 1980~\cite{Pa80}, and a few years later, resolved positively~\cite{Pach86} by Pach. Only twenty years later was it shown that $f(Q,k)$ is at most quadratic in $k$.

\begin{theorem}[Pach and T\'{o}th~\cite{PT07}]
\label{we-beat-pach}
Given a centrally symmetric convex polygon $Q$, there exists a constant $\alpha_Q$ such that every $\alpha_Q k^2$-fold covering of the plane by translates of $Q$ can be decomposed into $k$ coverings.
\end{theorem}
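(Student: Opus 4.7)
The plan is to proceed by induction on $k$, using Pach's 1986 theorem~\cite{Pach86} (which provides a constant $c_Q$ such that every $c_Q$-fold covering by translates of $Q$ decomposes into two coverings) as the main engine. Let $\alpha_Q$ be chosen sufficiently large in terms of $c_Q$ and the combinatorial complexity of $Q$. The base case $k=1$ is trivial. For the inductive step, given an $\alpha_Q k^2$-fold covering $\mathcal{F}$, I will extract a single 1-fold sub-covering $C_1 \subseteq \mathcal{F}$ whose depth at every point of the plane is at most $\alpha_Q(2k-1)$. The residual $\mathcal{F}\setminus C_1$ then has depth at least $\alpha_Q k^2 - \alpha_Q(2k-1) = \alpha_Q(k-1)^2$ everywhere, so the inductive hypothesis decomposes it into $k-1$ coverings; together with $C_1$ these form the required $k$ disjoint coverings. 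The quadratic bound $\alpha_Q k^2$ arises precisely because each of the $k$ extracted sub-coverings is allowed to consume up to a linear-in-$k$ amount of depth.

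To produce a sub-covering $C_1$ of local depth $O(k)$, I plan to build it in $O(k)$ successive rounds, each applying Pach's theorem to a different ``thin layer'' of $\mathcal{F}$. In round $i$, within the translates not yet committed, identify a sub-family $L_i$ that is still a covering of the plane but whose depth at every point is only a constant multiple of $c_Q$. Pach's theorem applied to $L_i$ yields two 1-fold coverings, one of which is added to $C_1$ while the other is discarded. Because each round contributes at most $O(1)$ depth to $C_1$ locally, and there are $O(k)$ rounds, the final depth of $C_1$ is $O(k)$; the constant $\alpha_Q$ is set large enough to absorb the implicit multiplicative constants.

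The main obstacle is the existence of the thin layers $L_i$: from an $m$-fold covering with $m\gg c_Q$, one must extract a sub-family that is itself a covering and has depth bounded by a constant (essentially $c_Q$) at every point. Here the structure of the centrally symmetric convex polygon $Q$ is essential. Fix a generic linear functional $\varphi$ whose direction is not parallel to any edge of $Q$, and order the translates of $\mathcal{F}$ by the value of $\varphi$ at their reference points. For each face of the arrangement induced by $\mathcal{F}$, select the $c_Q$ translates whose reference points come first in the $\varphi$-order among those covering that face. Showing that such a selection forms a covering relies on the central symmetry of $Q$ (so that $Q=-Q$ up to translation, hence the ``topmost'' reference points in the $\varphi$-direction lie in a translate of $Q$ that covers the face) and on the finite number of edge directions, which bounds the combinatorial complexity of boundary effects. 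Controlling the disagreement between adjacent faces of the arrangement — so that the global union is indeed a covering of depth near $c_Q$ — is the technical heart of the argument and is where the bulk of the proof would be spent.
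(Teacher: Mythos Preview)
First, note that the paper does not supply a proof of this statement at all: Theorem~\ref{we-beat-pach} is quoted from Pach and T\'oth~\cite{PT07} as the prior bound that the present paper \emph{improves} to linear (Theorem~\ref{thm:main}). So there is no ``paper's own proof'' of the quadratic bound to compare your attempt against.

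Turning to the internal logic of your plan, there is a structural inconsistency. Your inductive scheme hinges on extracting a single covering $C_1$ of local depth $O(k)$. To do this you propose, in each of $O(k)$ rounds, to peel off a ``thin layer'' $L_i$ that is itself a covering with depth bounded by a constant depending only on $Q$, and then to feed $L_i$ into Pach's two-decomposition theorem. But if you can extract such an $L_i$, you are already done: $L_1$ alone is a covering of depth $O(1)$, so set $C_1=L_1$, and the residual is still an $(\alpha_Q k^2 - O(1))$-fold covering. Iterating this $k$ times costs only $O(k)$ depth in total, which yields a \emph{linear} bound $f(Q,k)=O(k)$, not the quadratic one you are trying to establish. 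The detour through $O(k)$ rounds and the invocation of Pach's $2$-decomposition theorem are both superfluous under your own hypotheses, and the arithmetic $k^2-(2k-1)=(k-1)^2$ is never actually needed.

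The real content, therefore, sits entirely in the thin-layer extraction, and your sketch for it does not work as stated. Selecting, for each face of the arrangement, the $c_Q$ translates whose reference points are extremal in a fixed direction $\varphi$ does not define a single sub-family: different faces make different selections, and the union of all selected translates can have unbounded depth at points lying in many faces' selections simultaneously. Central symmetry and a bounded number of edge directions do not by themselves control this; indeed, producing a covering of bounded depth from a deep covering is essentially equivalent in difficulty to the linear bound that is the main theorem of this paper, and it is obtained there only via the full machinery of level curves, the region $\mathcal T$, witness wedges, and the circular-arc coloring argument of Sections~\ref{sec:prel}--\ref{sec:coloring}.
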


\noindent  In addition to the above, a lower bound of  $\lfloor {4k/3}\rfloor{-}1$ was also given.\medskip

The main result  in this paper is an improvement of the bound in Theorem~\ref{we-beat-pach}, from $\alpha_Q k^2$ to $\alpha_Q k$; thus, the upper and lower bounds now asymptotically match.

\paragraph{Related Work.} 
Coverings with other families of convex shapes have also been studied. For instance, indecomposable coverings of the plane by strips and rectangles were given by Pach, Tardos and T\'{o}th~\cite{PTT07}. The problem for arbitrary disks remains open, although a negative result for the dual problem was proved in~\cite{PTT07}: for any $k$, there exists a point set such that for any 2-coloring of this set, an open disk containing $k$ points of the same color can be found. Set-theoretic investigations of infinite-fold coverings can be found in~\cite{settheo}.

Note that covering decompositions can be seen as colorings of geometric hypergraphs. In these hypergraphs, vertices are the convex bodies in the covering, and every point in the plane corresponds to a hyperedge, defined as the set of bodies containing that point.
The assignment of colors to the vertices of this graph, such that every hyperedge contains all $k$ colors, yields a suitable decomposition.
A recent study of such problems and of their dual, including colorings of hypergraphs induced by halfspaces, halfplanes,  disks, and pseudo-disks, is contained in~\cite{ACCLS08}.\footnote{Using the notation of~\cite{ACCLS08}, the main result of this paper is that $p_{\bar{\tr}}(k) = O(k)$.}\medskip

Other definitions of proper colorings of geometric hypergraphs have been studied, such as \emph{conflict-free} colorings~\cite{shakharcf}. 
Here the problem is to find a coloring such that every hyperedge contains at least one vertex with a \emph{unique} color.
Variants of this notion have also been analyzed, e.g., \emph{$k$-fault-tolerant} conflict-free colorings where the conflict-free property must be true even if we were to remove any $k$ vertices in a hyperedge~\cite{faulttolerant}. \emph{$k$-conflict-free} colorings~\cite{jithamilton} require $k$ vertices with unique colors in every hyperedge. 

\paragraph{Applications to sensor networks.}
Consider a planar region monitored by sensors. Each sensor is represented as a point, which is said to monitor every other point contained in a polygonal region around it. Sensors are assumed to have limited lifetime, but can be switched on at any chosen time. Such models of limited-lifetime sensors have been studied in other contexts~\cite{othersensors}. Our results imply that a region can be monitored for $k$ units of time, provided that every point is covered by at least $\alpha k$ sensors. This involves partitioning the set of sensors into $k$ subsets, each covering the region. Sensors in the 
$j$-th subset are switched on at time $j$. 

\paragraph{Problem modification.}
We now slightly modify the statement of the problem.  Let $Q_p$ denote a centrally symmetric polygon $Q$ centered at point $p$.  Notice that $Q_p$ covers a point $p'$ if and only if $Q_{p'}$ contains $p$.
 
The problem involves a set of translates of $Q$ that covers every point of the plane  at least $\alpha_Q k$ times. 
This is geometrically equivalent to a point set $S$ such that any translate of $Q$ in the plane contains at least $\alpha_Q k$ points of $S$.    
Note, of course that $S$ must be infinite, as must be $Q$ in the original problem.

The decomposition of translates into $k$ covers is equivalent to a coloring of $S$ such that every translate of $Q$ in the plane will contain $k$ colors.
We strengthen the problem statement, by relaxing the condition that every translate in the plane contains sufficiently many points.  That is, we say that if a translate contains enough points, it will contain $k$ colors. This allows us to consider finite point sets as well. 
We thus prove the following result.

\begin{theorem}
\label{thm:rephrased}
Given a centrally symmetric convex polygon $Q$, there exists a constant $\alpha_Q$ such that for every planar point set $S$ and every $k\in \mathbb N$, 
$S$ can be $k$-colored so that any translate of $Q$ containing at least $\alpha_Q k$ points will also contain at least one point of each color.
\end{theorem}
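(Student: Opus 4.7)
The plan is to build an auxiliary bounded-degeneracy graph $G$ on $S$ that captures the local structure of $Q$-translates, and then color $G$ greedily so that every sufficiently large translate forces all $k$ colors to appear. Let $Q$ be a centrally symmetric convex $2m$-gon with edges $e_1, \ldots, e_{2m}$, arranged so that $e_i$ is parallel to $e_{i+m}$. Assign to each edge $e_i$ its outward unit normal $n_i$, and use these $2m$ directions to partition the plane around every point $p \in S$ into $2m$ wedges $W_1(p), \ldots, W_{2m}(p)$. The geometric observation driving the approach is that, by central symmetry, a translate $Q_x$ containing $p$ decomposes, as seen from $p$, into $2m$ wedge-shaped pieces that are aligned with the wedges $W_i(p)$.

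As the first concrete step, define $G$ on $S$ by joining each point $p$ to the nearest point of $S$ inside each wedge $W_i(p)$, for $i = 1, \ldots, 2m$. This graph has out-degree at most $2m$, hence bounded arboricity and $O(m)$-degeneracy. The key structural lemma is: there exists a constant $c = c(Q)$ such that for every translate $Q_x$ containing at least $ck$ points of $S$, one can find a sequence $p_1, \ldots, p_k$ of distinct points inside $Q_x$ for which each $p_{j+1}$ is an outgoing $G$-neighbor of $p_j$ along a canonical direction. The intended argument iterates a peeling step: after choosing $p_1, \ldots, p_j$, pick an extremal remaining point of $S \cap Q_x$ in a chosen direction $n_i$, and use the central symmetry of $Q$ to show that its $i$-th canonical neighbor must still lie inside $Q_x$ (rather than escaping across the opposite edge of $Q_x$).

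Given this lemma, the coloring step is a standard greedy argument on a degenerate graph: since $G$ is $O(m)$-degenerate, one can process $S$ in reverse degeneracy order and assign to each point a color distinct from the colors of its few already-processed canonical neighbors, ensuring that any canonical chain of length $k$ exhibits all $k$ colors. The main obstacle will be the structural lemma: controlling the canonical neighbor of an extremal point so that it does not leave $Q_x$. This is precisely where the central symmetry of $Q$ enters, via a wedge-by-wedge matching between the pieces of $Q_x$ at an extremal point $p$ and the wedges $W_i(p)$. Once this is established, the constant $\alpha_Q$ comes out linear in $k$ with an $O(m)$ dependence on $Q$, matching the advertised $O(k)$ bound.
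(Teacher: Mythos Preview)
Your approach is quite different from the paper's, which reduces to a circular-arc covering problem via level curves and witness wedges rather than building a nearest-neighbor graph. Unfortunately, the proposal has a genuine gap in the coloring step that cannot be repaired without a new idea.

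The problem is the claim that greedy coloring of an $O(m)$-degenerate graph ``ensur[es] that any canonical chain of length $k$ exhibits all $k$ colors.'' Greedy coloring in degeneracy order produces a \emph{proper} $(O(m)+1)$-coloring: adjacent vertices receive distinct colors. But a proper coloring does not force long paths to be rainbow. A path graph is $1$-degenerate and $2$-colorable, yet a path of length $k$ in a proper $2$-coloring uses only two colors, not $k$. More generally, once your palette has more than $O(m)$ colors the degeneracy constraint is slack, and nothing in the greedy rule prevents you from reusing the same two or three colors along any given canonical chain. So even granting the structural lemma in full, the conclusion does not follow.

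The structural lemma itself is also underspecified and, as stated, suspicious. You want the $i$-th canonical neighbor of an extremal point to remain inside $Q_x$, but the \emph{nearest} point of $S$ in the wedge $W_i(p)$ need not lie in $Q_x$ at all: there may be a point of $S$ just outside $Q_x$ that is closer in that wedge than every point of $S\cap Q_x$. Central symmetry of $Q$ controls the shape of $Q_x$ near $p$, but it says nothing about where the globally nearest wedge-neighbor of $p$ sits relative to the far edges of $Q_x$. Iterating the peeling step only compounds this: after the first step you are no longer at an extremal point, and there is no evident reason the chain stays inside the translate.

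For comparison, the paper avoids both issues by never trying to build chains or use graph colorings. It tiles the plane so that each square sees only wedges, introduces level curves $\mathcal C_i(r)$ and a deep region $\mathcal T$ in which at most one antipodal pair of level curves can intersect, parametrizes the resulting witness wedges by a circle, and shows each point of $S$ corresponds to at most two arcs. The $k$-coloring is then obtained by peeling off $k$ greedy arc-coverings of the circle, with a counting argument bounding how many times any position is covered per round. The linear bound $\alpha_Q k$ comes from this arc-covering analysis, not from any degeneracy argument.
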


For simplicity of exposition, we assume general position: no two
points in $S$ have the same slope as an edge of $Q$. This assumption
can be removed by applying an infinitesimal perturbation to the
points. Also, we assume $S$ is locally finite; 
every compact region contains a finite number of points.

\paragraph{Overview.}
We start by giving a sketch of the complete proof before going into details. The original problem is transformed as follows.

The problem of coloring a (possibly infinite) point set with respect to translates of a polygon (the strengthened statement presented in Theorem~\ref{thm:rephrased}) is  shown to be equivalent to coloring a finite point set  with respect to a finite set of wedges determined by $Q$ (see Section~\ref{sec:prel}). 
In other words,  the problem is now to color a set of points such that every wedge containing a sufficient number of  points $m$ will also contain $k$ colors.  Our goal is to show that $m=O(k)$.
 
We will restrict to color points inside certain \emph{witness} wedges, which have the property that any wedge containing at least $m$ points will contain a witness. Witnesses will contain at least $r$ points. This is why we will define the \emph{level curve} which bounds the union of such minimal wedges for a fixed pair of bounding directions  (see Section~\ref{sec:prel}).

If the level curves did not intersect, coloring the points would be straightforward. It is the intersections of these curves that make the problem non-trivial, and forbids us to restrict to witnesses on level curves only.  Since $k$ is small with respect to the point set, intuitively one can imagine that level curves tend not to venture too ``deep'' into a point set.  In other words, a typical wedge will not reach far into the set before collecting $r$ points.
In Section~\ref{sec:truncation} we define a polygonal region that is deep enough so that the complexity of level curve intersections  within the region is manageable.  Our construction of this region will be such that we will be able to restrict to considering witness wedges within.

To enable us to reduce our problem to circular arc coloring, in Section~\ref{sec:wedges} we define a parametrization which maps the set of witness wedges to the boundary of a circle.  This is directly tied to a mapping of points in $S$ to circular arcs, i.e., intervals on the boundary of the circle (Section~\ref{sec:inter}). Our mapping is such that a position $x$ on the circle will belong to an interval corresponding to point $p\in S$ if and only if the witness wedge represented by $x$ contains $p$.  As every witness wedge contains at least $r$ points, every position on the circle belongs to at least $r$ intervals. The key property of the parametrization is that every point in $S$ is mapped to at most two intervals. 

Thus, the problem is reduced to $k$-coloring arcs on an $\Theta(k)$-covered circle (with certain geometric constraints for the arcs), so that every position on the circle is covered by at least one interval of each color. In Section~\ref{sec:coloring} we give an algorithm for  this circular arc coloring problem. 

 Note that the reductions and the transformations of the problem are constructive; thus our algorithm to color circular arcs yields a simple polynomial algorithm for the original problem. 

\section{Reduction to Wedges}\label{sec:prel}

Let $Q$ be a closed, convex,  centrally symmetric
$2n$-gon, with vertices $q_0,q_1,\ldots , q_{2n{-}1}$ in counterclockwise
order. Throughout the paper, indices are taken modulo $2n$. 
The set of indices between $i$ and $j$ in counterclockwise order is denoted by $[i,j]$.\medskip

We first  reduce the problem to coloring a finite set of points with respect to {\em wedges} instead of coloring a possibly infinite set with respect to polygons. 
This idea is also used in~\cite{PT07,TT07}.

We consider a tiling of the plane, with squares of side
$\delta$, where $\delta$ is half of the smallest distance
between non-consecutive edges of $Q$.
Let $Q'$ be a translate of $Q$.
By construction, any intersection of $Q'$ with a  square is a
wedge
with boundary directions parallel to two consecutive edges of $Q$
(see Figure~\ref{fig:reduc}).
A wedge bounded by rays parallel to $q_iq_{i{-}1}$ and
$q_iq_{i{+}1}$ will be called  {\em type} $i$, or alternatively an {\em $i$-wedge}. 
The closed $i$-wedge with apex $x$ is denoted by $W_i(x)$. 

\begin{figure}[htb]
\begin{center}
\includegraphics[scale=.7]{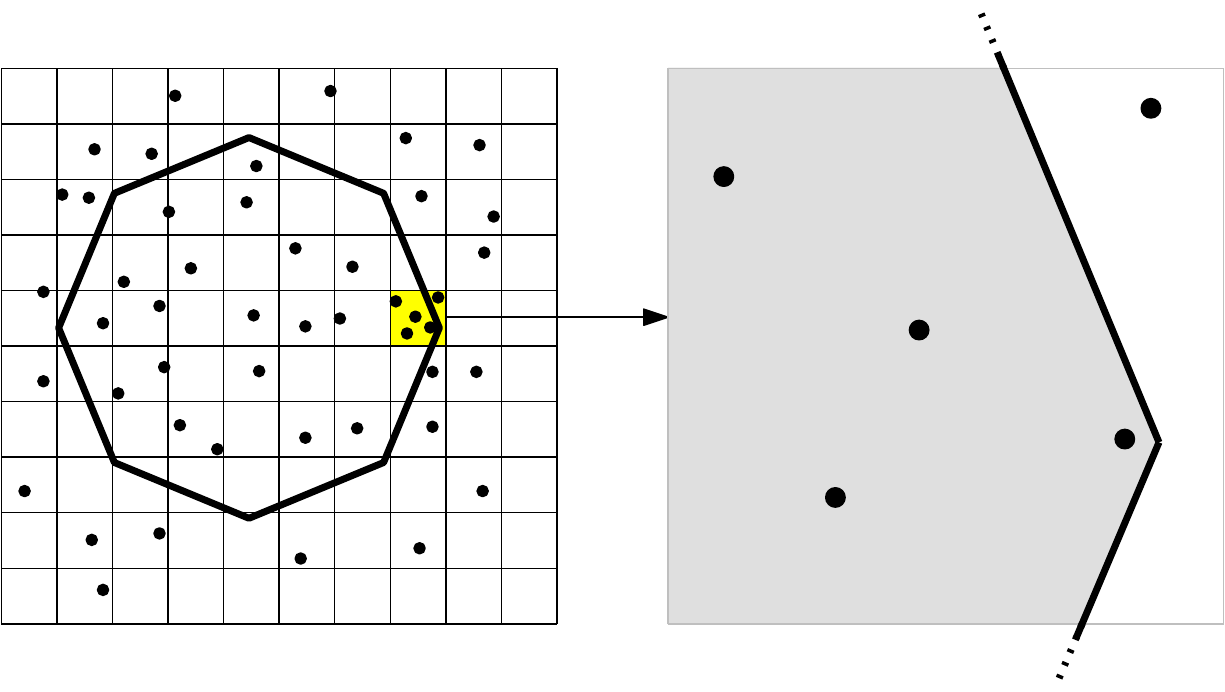}
\end{center}
\caption{\label{fig:reduc} Reduction of the problem with centrally symmetric polygons to wedges in a square.}
\end{figure}

The number of squares that $Q'$ intersects is bounded by
a constant $c_Q$ that only depends on $Q$. Therefore if $Q'$ contains at least $\alpha_Q k$ points,  by the
pigeonhole principle $Q'$ contains at least $\alpha_Q k/c_Q$ points within one square.

We will restrict to considering a single square and
the $2n$ wedges defined by $Q$. Hence the problem reduces to coloring (independently) the finite bounded point set $S$ in each square, i.e., we will seek a $k$-coloring of each square such that any $i$-wedge
containing at least $\alpha_Q k /c_Q$ points will contain all $k$ colors.\medskip

We now define the notion of \emph{level curves} for wedges. This notion extends the 
definition of {\em boundary points} in~\cite{PT07} and~\cite{Pach86}, which are the points found on the first level.
We associate a curve with each $i$-wedge. 
Let $\bigw_i^{r}$ be the set of apices of all $i$-wedges containing $r$ points. Formally,
$$
\bigw_i^{r} := \closure\left(\{x\in \R^2: |W_i(x)\cap S| = r\}\right) ,
$$
\noindent where $\closure(\cdot)$ is the closure operator. We define $\bigc_i(r)$ as the boundary of $\bigw_i^{\geq r}:=\bigcup_{j\ge r}\bigw_i^{j}$.   Accordingly, the closed region that includes the complement of  $\bigw_i^{\geq r}$ will be denoted $\bigw_i^{< r}$ (i.e. the intersection of the two regions is $\bigc_i(r)$).\medskip

Note that $\bigc_i (r)$ is a monotone staircase polygonal path, with edge directions
parallel to those of its corresponding $i$-wedge. Since $S$ is in general
position,  for any $x\in \bigc_i(r)$ that is not a vertex of $\bigc_i(r)$,
$W_i(x)$ contains exactly $r$ points. More precisely, we have the following.
\begin{observation}
\label{obs:rorrp1}
For all $x\in \bigc_i(r)$, $W_i(x)$ contains either $r$ or $r{+}1$ points of $S$.
\end{observation}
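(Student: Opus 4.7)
The plan is to bracket $|W_i(x)\cap S|$ between $r$ and $r{+}1$ by invoking two kinds of semi-continuity together with the general-position assumption. The closed wedge $W_i(x)$ gives an upper semi-continuous count (yielding the lower bound $\ge r$), while the corresponding open wedge $W_i^o(x)$ gives a lower semi-continuous count (yielding the upper bound $\le r{+}1$), with the two-unit gap accounted for by the fact that at most two points of $S$ can sit on $\partial W_i(x)$.

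For the lower bound, I will first note that for each fixed $p\in S$ the set $\{x:p\in W_i(x)\}$ equals $W_{i+n}(p)$, a closed wedge (using the central symmetry of $Q$); hence $\mathbf{1}[p\in W_i(x)]$ is upper semi-continuous in $x$. Since $S$ is finite inside the relevant square, $|W_i(x)\cap S|$ is a finite sum of upper semi-continuous functions and is therefore itself upper semi-continuous. The super-level set $\{x:|W_i(x)\cap S|\ge r\}$ is thus closed; and since $\bigw_i^{\ge r}$ is the union of closures $\bigw_i^j$ for $j\ge r$, each of which is contained in that closed super-level set, the two sets coincide. Consequently $x\in\bigc_i(r)\subseteq\bigw_i^{\ge r}$ already forces $|W_i(x)\cap S|\ge r$.

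For the upper bound, the symmetric observation is that $\{x:p\in W_i^o(x)\}$ is open, so $\mathbf{1}[p\in W_i^o(x)]$, and hence $|W_i^o(x)\cap S|$, is lower semi-continuous. Because $x\in\partial\bigw_i^{\ge r}$, every neighborhood of $x$ contains some $x'$ with $|W_i^o(x')\cap S|\le|W_i(x')\cap S|\le r-1$, so lower semi-continuity forces $|W_i^o(x)\cap S|\le r-1$. Finally, $|W_i(x)\cap S|-|W_i^o(x)\cap S|$ counts precisely the points of $S$ lying on the two bounding rays of $\partial W_i(x)$, whose directions are parallel to specific edges of $Q$; general position forbids two points of $S$ from being collinear along an edge direction of $Q$, so each ray holds at most one point of $S$, giving a boundary contribution of at most $2$. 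Hence $|W_i(x)\cap S|\le(r-1)+2=r+1$. The only real subtlety is bookkeeping the closed/open semi-continuity dichotomy; once that is clean, the general-position bound on boundary points closes the argument immediately.
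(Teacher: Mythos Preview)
Your argument is correct. The paper does not give a formal proof of this statement at all; it is stated as an Observation, preceded only by the remark that $\bigc_i(r)$ is a monotone staircase and that, by general position, non-vertex points $x$ on $\bigc_i(r)$ yield $|W_i(x)\cap S|=r$. Your semi-continuity framing is a clean formalization of exactly this intuition: upper semi-continuity of the closed-wedge count shows $\bigw_i^{\ge r}$ coincides with the (closed) super-level set $\{x:|W_i(x)\cap S|\ge r\}$, giving the lower bound; lower semi-continuity of the open-wedge count together with the fact that boundary points of $\bigw_i^{\ge r}$ are approached from the complement gives $|W_i^o(x)\cap S|\le r-1$; and the general-position hypothesis bounds the boundary contribution by two, since each of the two bounding rays is parallel to an edge of $Q$ and hence carries at most one point of $S$. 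This is not a genuinely different route so much as a rigorous write-up of what the paper leaves implicit.
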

\noindent The curves $\bigc_i(3)$ for a square are illustrated in Figure~\ref{fig:curves}. A key property of $\bigc_i$ is the following.
\begin{observation}
Any $i$-wedge containing at least $r$ points of $S$ contains an $i$-wedge whose apex belongs to $\bigc_i (r)$.
\end{observation}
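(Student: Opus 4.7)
My plan is to start from the apex $x$ of a given $i$-wedge $W_i(x)$ with $|W_i(x)\cap S|\ge r$ and continuously slide a new apex $y$ into the interior of $W_i(x)$ until it lands on $\bigc_i(r)$, while maintaining $W_i(y)\subseteq W_i(x)$. The first observation I need is purely combinatorial: because both wedges open in the same pair of directions parallel to edges of $Q$, we have $W_i(y)\subseteq W_i(x)$ if and only if $y\in W_i(x)$. Once I have this, the problem becomes one of proving that the segment from $x$ into $W_i(x)$ meets $\bigc_i(r)$.

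Next, I would choose a ray $\rho$ emanating from $x$ into the interior of $W_i(x)$ (say, along the bisector of the two bounding rays). As a point $y$ moves along $\rho$ away from $x$, the wedge $W_i(y)$ shrinks monotonically: for $y,y'\in\rho$ with $y'$ further from $x$, we have $W_i(y')\subseteq W_i(y)$. Consequently the function $y\mapsto |W_i(y)\cap S|$ is non-increasing along $\rho$; it starts at a value $\ge r$ when $y=x$ and, since $S$ is locally finite, drops strictly below $r$ once $y$ has moved far enough that $W_i(y)$ avoids the (finitely many) points of $S$ lying in $W_i(x)$ close to $x$.

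I then let $y^\ast$ be the supremum along $\rho$ of those $y$ satisfying $|W_i(y)\cap S|\ge r$. By the monotonicity of the wedges, every $y$ strictly before $y^\ast$ belongs to $\bigw_i^{\ge r}$, while every $y$ strictly after $y^\ast$ lies in the open complement of $\bigw_i^{\ge r}$. Hence $y^\ast$ is a boundary point of $\bigw_i^{\ge r}$, i.e.\ $y^\ast\in\bigc_i(r)$, and it lies inside $W_i(x)$ by construction. Combined with the first observation, $W_i(y^\ast)\subseteq W_i(x)$, which is exactly the claim.

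The only subtlety I foresee is justifying membership in $\bigc_i(r)$ at the transition, because $\bigw_i^{r}$ was defined using a closure. This is handled by noting that the strict jumps in $|W_i(y)\cap S|$ along $\rho$ happen exactly when $y$ crosses one of the two lines through a point of $S$ parallel to the bounding rays of the $i$-wedge, so $y^\ast$ is reached at such a transition and, by taking limits from either side, lies in the closure used to define $\bigw_i^{\ge r}$ as well as in the closure of its complement; general position of $S$ with respect to edge directions of $Q$ guarantees that at most one such transition happens at $y^\ast$, making the identification with the staircase curve $\bigc_i(r)$ clean.
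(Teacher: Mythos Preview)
The paper does not actually prove this statement: it is labelled an \emph{Observation} and left without argument, being regarded as immediate from the definitions (the level curve $\bigc_i(r)$ is the boundary of the region where the $i$-wedge count is at least $r$, so shrinking any $i$-wedge with $\ge r$ points until the count is about to drop below $r$ lands the apex on $\bigc_i(r)$).

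Your proposal supplies exactly this argument in full detail, and it is correct. The key ingredients---that $W_i(y)\subseteq W_i(x)$ iff $y\in W_i(x)$ (because $W_i(0)$ is a convex cone), that the count $|W_i(y)\cap S|$ is non-increasing along a ray into the wedge, and that the transition point lies on $\partial\bigw_i^{\ge r}=\bigc_i(r)$---are all valid. One small remark: your justification that the count eventually drops below $r$ appeals to local finiteness and ``points close to $x$'', which is slightly imprecise since an unbounded wedge could in principle contain infinitely many points of a merely locally finite set; but in the paper's setting $S$ is the finite point set inside one square of the tiling, so the issue does not arise. The closure subtlety you flag is also handled cleanly by noting that the count function is upper semi-continuous (each set $\{y:p\in W_i(y)\}$ is closed), so $\{y:|W_i(y)\cap S|\ge r\}$ is closed and coincides with $\bigw_i^{\ge r}$.
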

We conclude that it is sufficient to color points in the union of 
all $\bigw_i^{< r}$ (in other words, in the union of regions to the ``left'' of each $\bigc_i(r)$).   Handling the complexity of the intersections of these curves is the next problem that we deal with.

\begin{figure}[htb]
\begin{center}
\includegraphics[angle=-90, scale=.5]{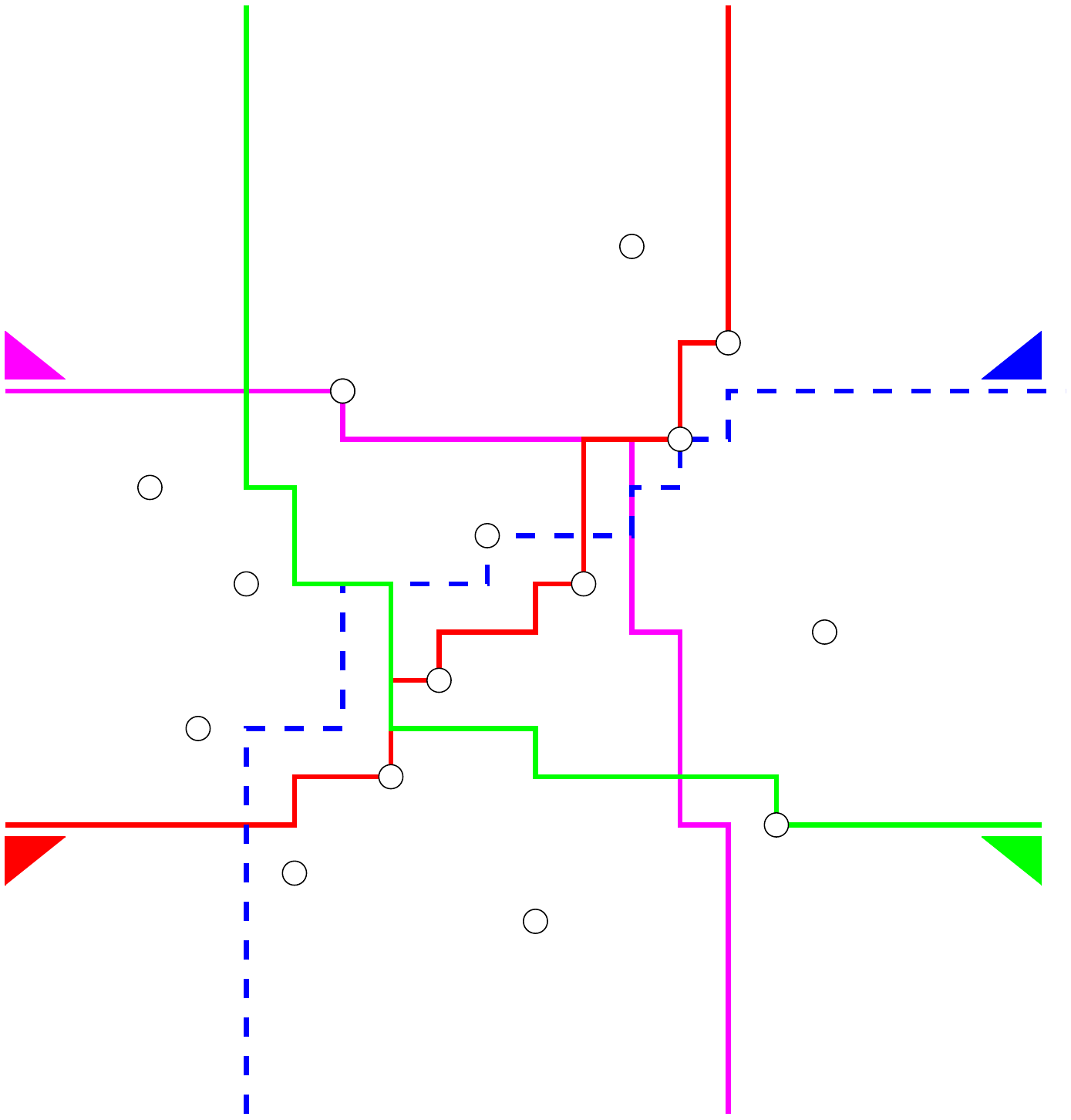}
\end{center}
\caption{\label{fig:curves} The curves of $\bigc_i (3)$, when $Q$ is an axis-parallel square.}
\end{figure}

\section{Restriction to High-Depth Region}
\label{sec:truncation}
We will show that in order to determine the witness wedges that we must  color, it is not necessary to consider complete level curves. At the expense of a constant factor to $f(Q,k)$, we restrict to the portion of the level curves inside a polygon $\bigt$. Inside this polygon, only few intersections between level curves can occur, which simplifies the coloring task.\medskip 

Let $\ell_i$ be the oriented line with
direction $q_iq_{i{+}1}$ going through a point of~$S$
and such that the closed halfplane to its left
contains exactly $2r{+}3$ points.
Let $L_i$ be the closed halfplane to the right of $\ell_i$.
Denote by $\bigt$  the intersection of the $2n$ halfplanes defined by $Q$:
$$
\bigt := \bigcap_{i=0}^{2n{-}1} L_i.
$$
We assume $\bigt\not=\emptyset$: this will be shown true later for the
values of $r$ that we will use (by the well-known center point theorem, it is true as long as $2r{+}3\le |S|/3$). Note that not all lines $\ell_i$
appear on the boundary of $\bigt$ (see Figure~\ref{fig:vis}).

\begin{lemma}
\label{lem:vis}
For all $i\in[0,2n{-}1]$ there is a vertex $v_i$ of $\bigt$ such that
$v_i\in W_i(x)$ for all $x\in\bigt$.
\end{lemma}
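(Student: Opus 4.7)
The plan is to reformulate the condition as a simultaneous linear maximization over $\bigt$, and then locate the required vertex via the normal-cone structure of the convex polygon $\bigt$.

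First, I would rewrite the wedge $W_i$ as the intersection of two closed halfplanes through the origin: $W_i = \{u \in \R^2 : \langle u, n_1\rangle \ge 0 \text{ and } \langle u, n_2\rangle \ge 0\}$, where $n_1$ and $n_2$ are the unit inward normals of its two boundary rays. The condition $v_i \in W_i(x) = x + W_i$ for every $x \in \bigt$ then becomes $\langle v_i, n_j\rangle \ge \langle x, n_j\rangle$ for $j = 1,2$ and every $x \in \bigt$. Equivalently, a vertex of $\bigt$ satisfies the lemma if and only if it simultaneously maximizes the two linear functionals $\langle \cdot, n_1\rangle$ and $\langle \cdot, n_2\rangle$ on $\bigt$.

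Next, I would identify $n_1$ and $n_2$ with outward normals of $\bigt$. Because the boundary rays of $W_i$ are parallel to the edges $q_{i-1}q_i$ and $q_iq_{i+1}$ of $Q$, the vectors $n_1,n_2$ are the inward edge normals of $Q$ at these two edges; tracing the orientations then shows they also coincide with the outward normals of $\bigt$ at its potential edges on $\ell_{i-1}$ and $\ell_i$, since $\bigt$ lies on the right of every $\ell_j$ and on the opposite side from the $2r+3$ extremal points. Moreover, because the $2n$ edge normals of the convex $2n$-gon $Q$ rotate monotonically around the circle, $n_1$ and $n_2$ are cyclically consecutive among the $2n$ candidate outward-normal directions that $\bigt$'s edges can have.

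Finally, I would apply the normal-cone partition: the vertices of $\bigt$ induce a tiling of the unit circle into closed normal cones, with boundaries exactly at the actual outward normals of $\bigt$'s edges, such that a direction $d$ lies in the normal cone at vertex $v$ iff $v$ maximizes $\langle \cdot, d\rangle$ on $\bigt$. Since no candidate normal, and hence no actual edge normal of $\bigt$, lies strictly between $n_1$ and $n_2$ in cyclic order, the entire closed arc from $n_1$ to $n_2$ must lie inside a single normal cone. The associated vertex is the $v_i$ promised by the lemma.

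The hard part will be the degenerate case highlighted just before the statement, in which some $\ell_j$ do not bound $\bigt$: then the ``obvious'' candidate $\ell_{i-1}\cap\ell_i$ may not be a vertex of $\bigt$ at all, and one might worry that no suitable $v_i$ exists. The normal-cone argument dispatches this uniformly by automatically producing whichever vertex of $\bigt$ has a normal cone wide enough to absorb both $n_1$ and $n_2$, regardless of which $\ell$-lines happen to appear on the boundary.
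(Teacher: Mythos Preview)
Your argument is correct and is essentially the dual formulation of the paper's own proof. The paper defines $v_i$ as the intersection of the two support lines $\hat\ell_{i-1}$ and $\hat\ell_i$ of $\bigt$ parallel to $\ell_{i-1}$ and $\ell_i$, and then observes that the antipodal wedge $W_{i+n}(v_i)$ contains all of $\bigt$, whence $v_i\in W_i(x)$ for every $x\in\bigt$ by the symmetry $x\in W_{i+n}(v)\iff v\in W_i(x)$. Your reformulation---simultaneously maximizing the linear functionals $\langle\cdot,n_1\rangle$ and $\langle\cdot,n_2\rangle$ over $\bigt$---is exactly the statement that $v_i$ lies on both of those support lines, and your normal-cone argument is precisely what is needed to see that the two support lines in consecutive directions meet at a common vertex of $\bigt$ (a step the paper leaves implicit). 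So the two proofs coincide in substance: yours is more analytic and handles the degenerate case where some $\ell_j$ is absent from $\partial\bigt$ explicitly, while the paper's three-line version is more geometric and leans directly on the antipodal containment.
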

\begin{proof}
Let $\hat\ell_i$ be the oriented line parallel to $\ell_i$ that is tangent to~$\bigt$ and such that $\bigt$ is contained in the closed halfplane to the {\em right} of $\hat\ell_i$.
Then for 
$$
v_i := \hat\ell_i\cap\hat\ell_{i{-}1},
$$ 
the wedge~$W_{i{+}n}(v_i)$ contains~$\bigt$. Therefore, $v_i\in W_i(x)$ for all $x\in\bigt$.
Note  that a vertex of $\bigt$ may have multiple labels $v_i$ (see Figure~\ref{fig:vis}).
\end{proof}

\begin{lemma}\label{lem:twowedges}
Let $x$ be a point contained in two wedges $W_i(y)$ and
$W_j(z)$ that  contain at most $r$ and $r'$ points of $S$ respectively, with $0<(j{-}i)<n$. 
Then for all $i'\in [i,j{-}1]$, the oriented line
with direction $q_{i'}q_{i'{+}1}$ through $x$ has at most $r{+}r'$ points
of $S$ strictly to its left.
\end{lemma}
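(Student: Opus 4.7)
The plan is to bring both wedge apices to the common point $x$. The key observation is that each $W_i(\cdot)$ is a translate of the same convex cone at its argument, so if $x\in W_i(y)$, then $W_i(x)\subseteq W_i(y)$: writing $W_i(y)=C_i+y$ for the cone $C_i$ anchored at the origin, the containment $x\in W_i(y)$ means $x-y\in C_i$, whence $W_i(x)=C_i+y+(x-y)\subseteq C_i+y=W_i(y)$. In particular $|W_i(x)\cap S|\le r$ and, by the symmetric argument, $|W_j(x)\cap S|\le r'$. It therefore suffices to show that the closed left half-plane $H_L$ bounded by the line through $x$ in direction $q_{i'}q_{i'+1}$ is contained in $W_i(x)\cup W_j(x)$; the bound on the strictly-left half-plane then follows a fortiori.

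To see this containment, I would pass to the circle of directions at $x$. Writing $e_k$ for the direction of edge $q_k q_{k+1}$, central symmetry of $Q$ gives $e_{k+n}=e_k+\pi$. A direct inspection of its two bounding rays shows that $W_i(x)$ is the angular sector running counterclockwise from $e_i$ to $e_{i-1}+\pi=e_{i+n-1}$; similarly $W_j(x)$ is the sector from $e_j$ to $e_{j+n-1}$, and the left half-plane $H_L$ is the sector from $e_{i'}$ to $e_{i'}+\pi=e_{i'+n}$.

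Because $0<j-i<n$ we have $j\le i+n-1$, so the sectors $W_i(x)$ and $W_j(x)$ meet or overlap and their union is the contiguous sector from $e_i$ to $e_{j+n-1}$. For $i'\in[i,j-1]$, the monotonicity of the $e_k$ around $Q$ yields $e_i\le e_{i'}$ and $e_{i'+n}\le e_{j+n-1}$, hence $H_L\subseteq W_i(x)\cup W_j(x)$ as required. The only mildly delicate point is the bookkeeping with cyclic indexing and the central-symmetry identity $e_{k+n}=e_k+\pi$; once these are in hand, the remaining comparisons reduce to the chain $i\le i'\le j-1<j\le i+n-1$ of index inequalities.
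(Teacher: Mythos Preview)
Your proof is correct and follows the same route as the paper's one-line argument: show that the left half-plane through $x$ is contained in $W_i(x)\cup W_j(x)$, after first sliding the apices from $y,z$ to $x$ via the cone-translation containment. The paper leaves both of these steps implicit; your write-up merely spells out the angular bookkeeping.
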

\begin{proof}
It suffices to observe that the halfplane to the left of the line is contained in the union of  wedges $W_i(x)$ and $W_j(x)$.
(See Figure~\ref{fig:twowedges}.)
\end{proof}

\begin{figure}[htb]
\begin{center}
\subfigure[\label{fig:vis}Definition of the points $v_i$.]{\includegraphics[scale=1]{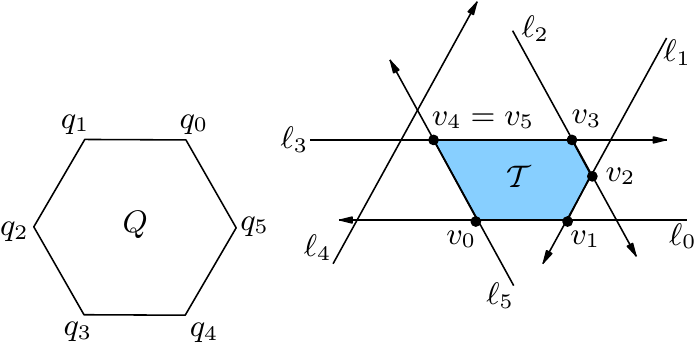}}\hskip 1cm\subfigure[\label{fig:twowedges}Two wedges containing the same point $x$.]{\includegraphics[scale=.5]{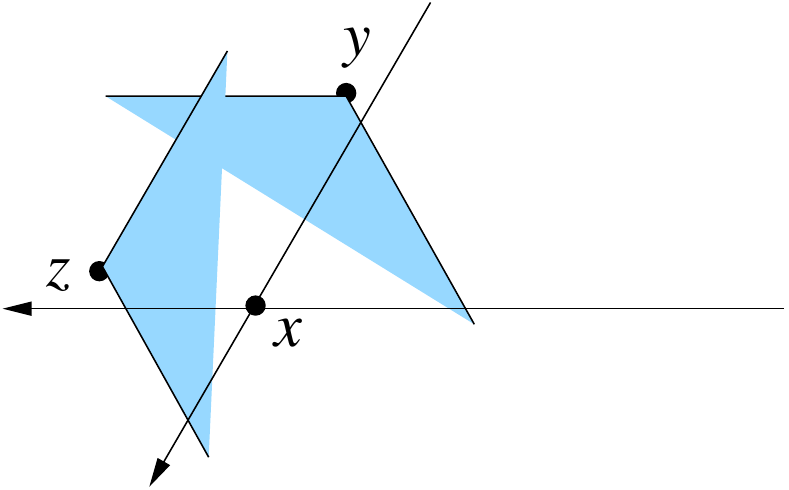}}
\end{center}
\caption{Construction of $\bigt$, and illustration of Lemma~\ref{lem:twowedges}.}
\end{figure}

We now show that if two level curves have an intersection in $\bigt$, then they must have antipodal indices, that is, $i$ and $i{+}n$. 
We actually prove the stronger statement that the regions $\bigw_i^{< r}$ do not have any intersection
in $\bigt$, unless they have antipodal indices.
Note that by Lemma~\ref{lem:vis}, 
$v_{i}\in \bigw_{i}^{<r}$.

\begin{lemma}
\label{lem:antipodal}
If $j\not= i$ and $j\not= i{+}n$, then $\bigw_i^{< r} \cap \bigw_j^{< r}\cap\bigt = \emptyset$.
\end{lemma}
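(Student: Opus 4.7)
The plan is to argue by contradiction, pitting the upper bound from Lemma~\ref{lem:twowedges} against a matching lower bound built into the definition of $\bigt$. Suppose some point $x$ lies in $\bigw_i^{<r}\cap\bigw_j^{<r}\cap\bigt$. By swapping $i$ and $j$ if necessary, I may assume $0<j-i<n$. The definition of $\bigw_i^{<r}$ as (essentially) the closure of the complement of $\bigw_i^{\geq r}$ forces $|W_i(x)\cap S|\leq r$, and analogously $|W_j(x)\cap S|\leq r$. Since $x$ trivially lies in both $W_i(x)$ and $W_j(x)$, I would apply Lemma~\ref{lem:twowedges} with $y=z=x$ and $r'=r$: for every $i'\in[i,j{-}1]$, the oriented line $m_{i'}$ through $x$ with direction $q_{i'}q_{i'+1}$ has at most $2r$ points of $S$ strictly to its left.

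The second half of the plan produces a contradictory \emph{lower} bound using $x\in\bigt$. Fix any $i'\in[i,j{-}1]$. Because $x\in\bigt\subseteq L_{i'}$, the line $m_{i'}$ either coincides with $\ell_{i'}$ or lies strictly to its right, so the closed left halfplane of $m_{i'}$ contains the closed left halfplane of $\ell_{i'}$, and therefore contains all $2r{+}3$ points of $S$ used to define $\ell_{i'}$.

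The main obstacle I anticipate is the boundary case in which $x$ lies exactly on $\ell_{i'}$, so that some of those $2r{+}3$ points can sit on $m_{i'}$ itself rather than strictly to its left, and Lemma~\ref{lem:twowedges} only constrains points \emph{strictly} to the left. This is precisely where the general position hypothesis rescues the count: since no two points of $S$ share a direction with an edge of $Q$, at most one of the $2r{+}3$ points can lie on $m_{i'}$. Hence at least $2r{+}2$ of them lie strictly to the left of $m_{i'}$, contradicting the upper bound of $2r$ from Lemma~\ref{lem:twowedges} and closing the argument. The extra ``$+3$'' in the definition of $\bigt$ (rather than something like $+1$) is exactly the slack needed for this edge case to still yield a strict contradiction.
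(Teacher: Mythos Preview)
Your argument follows the paper's proof essentially verbatim: pick $x$ in the intersection, apply Lemma~\ref{lem:twowedges} with $y=z=x$, and contradict the halfplane count built into the definition of~$\bigt$. One small correction is needed: from $x\in\bigw_i^{<r}$ you only get $|W_i(x)\cap S|\le r{+}1$, not $\le r$---points on $\bigc_i(r)$ can see $r{+}1$ points of $S$ (Observation~\ref{obs:rorrp1}), and $\bigc_i(r)\subset\bigw_i^{<r}$. The paper accordingly obtains the upper bound $2r{+}2$ (rather than your $2r$) from Lemma~\ref{lem:twowedges}, and the ``$+3$'' in the definition of $\ell_i$ is chosen to leave enough slack for the contradiction. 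Your explicit treatment of the lower bound and of the boundary case $x\in\ell_{i'}$ via general position is more careful than the paper's terse ``this contradicts the fact that $x\in\bigt$''; just note that with the corrected upper bound $2r{+}2$, that boundary case becomes $2r{+}2$ versus $2r{+}2$ and needs one more genericity observation (that $x$ is not simultaneously a reflex vertex of both level curves and on $\ell_{i'}$), which the paper likewise leaves implicit.
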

\begin{proof}
Assume by symmetry that $0<(j{-}i)<n$ and suppose the two regions intersect
at point $x\in\bigt$. Consider the two wedges $W_i(x)$ and $W_j(x)$. Since $x$ is contained 
in $\bigw_i^{< r} \cap \bigw_j^{< r}$, they both contain at most $r{+}1$ points. 
By Lemma~\ref{lem:twowedges}, for all $i'\in[i,j{-}1]$, the oriented line
with direction $q_{i'}q_{i'{+}1}$ through $x$ has at most $2r{+}2$ points of $S$ strictly to its left.
This contradicts the fact that $x\in\bigt$.
\end{proof}

We proceed to show that in fact only one pair of level curves can intersect inside $\bigt$ (a related statement was proved by Pach~\cite{Pach86}). 
This is illustrated in Figure~\ref{fig:cprimes}.\bigskip

\begin{figure}[htb]
\begin{center}
\subfigure[A case where the regions $\bigw_i^{< r} \cap \bigt$ (in dark) do not intersect.]{\includegraphics[angle=-90,scale=.4]{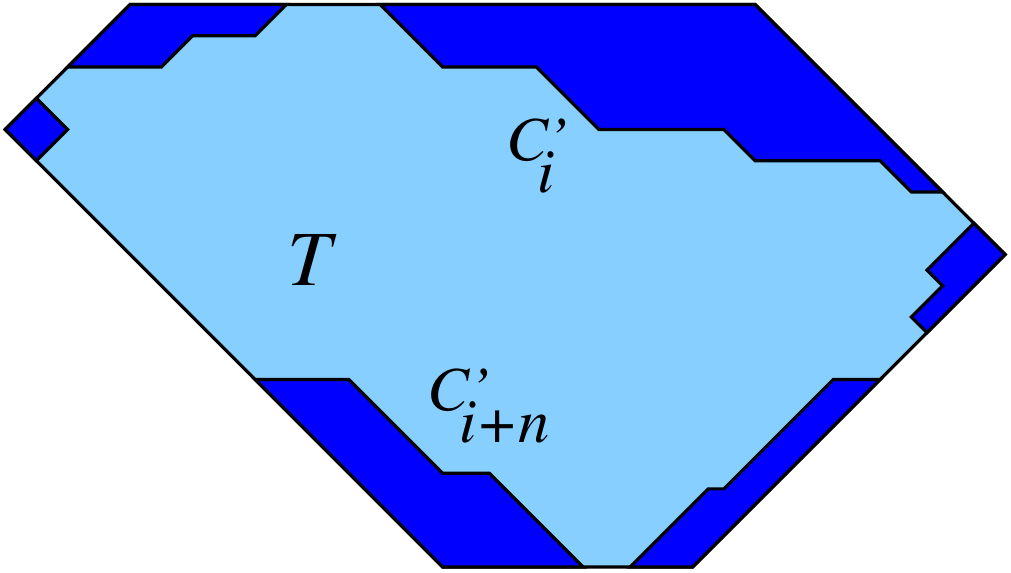}}
\hspace{2cm}
\subfigure[Regions $\bigw_i^{< r}$ and $\bigw_{i{+}n}^{< r}$ may intersect in $\bigt$.]{\includegraphics[angle=-90,scale=.4]{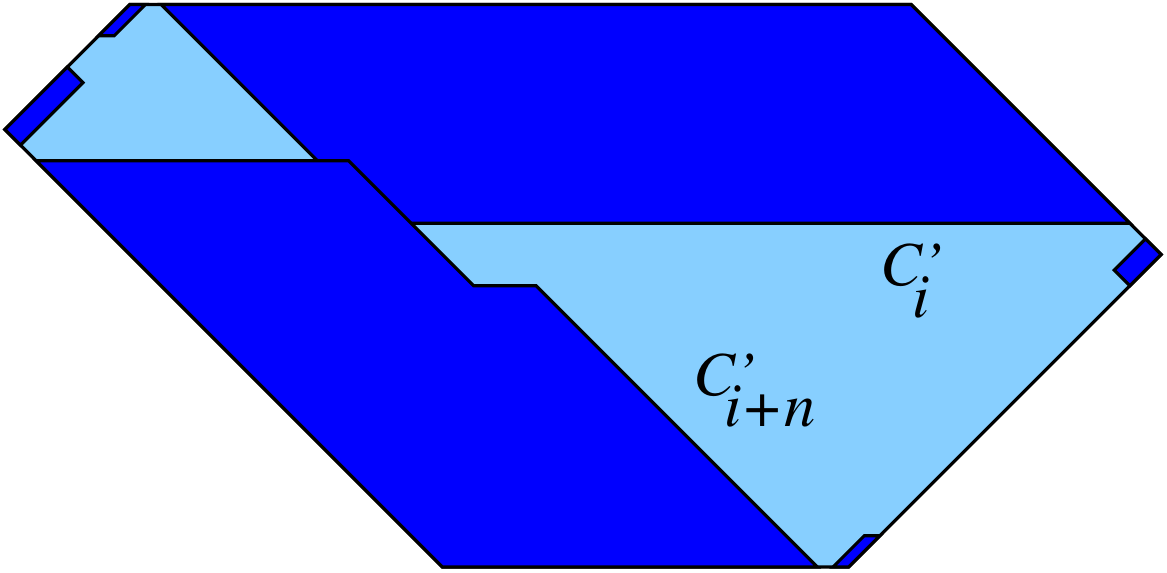}}
\end{center}
\caption{\label{fig:cprimes}Illustration of Lemmas~\ref{lem:antipodal} and~\ref{lem:oneinter}.}
\end{figure}

\begin{lemma}
\label{lem:oneinter}
At most one pair of regions $\{ \bigw_i^{< r} , \bigw_{i{+}n}^{< r}\}$ intersect in $\bigt$.
\end{lemma}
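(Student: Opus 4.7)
My plan is to proceed by contradiction, in the same spirit as Lemma~\ref{lem:antipodal}. Suppose two distinct antipodal pairs both meet $\bigt$: say $\{\bigw_i^{<r},\bigw_{i+n}^{<r}\}$ at a point $x$ and $\{\bigw_j^{<r},\bigw_{j+n}^{<r}\}$ at a point $y$, where, after swapping $j$ with $j+n$ if necessary, we may assume $0<j-i<n$. The hypothesis forces $|W_i(x)\cap S|$, $|W_{i+n}(x)\cap S|$, $|W_j(y)\cap S|$, $|W_{j+n}(y)\cap S|\le r+1$, and $x,y\in\bigt$. The overall strategy is to exhibit, via Lemma~\ref{lem:twowedges}, a line on which at most $2r+2$ points lie strictly to one side, while the definition of $\bigt$ forces at least $2r+3$ points on the same side.

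The natural case split is on which of the four cones at $x$ -- the two wedges $W_i(x)$, $W_{i+n}(x)$ and the two transverse cones between them -- contains $y$. If $y\in W_i(x)$, then $y$ lies simultaneously in $W_i(x)$ (at most $r+1$ points) and in $W_j(y)$ (trivially, at most $r+1$ points), with $0<j-i<n$. Lemma~\ref{lem:twowedges} applied at $y$ would then yield at most $2r+2$ points of $S$ strictly to the left of the line through $y$ with direction $q_{i'}q_{i'+1}$, for each $i'\in[i,j-1]$. However, $y\in\bigt$ places this line on or to the right of $\ell_{i'}$; combining the general-position assumption with the fact that $\ell_{i'}$ has $2r+3$ points in its closed left halfplane yields at least $2r+3$ points of $S$ strictly to the left of the line through $y$, a contradiction. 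Since $j-i\ge 1$, there is at least one $i'$ in the range for which $y$ is strictly to the right of $\ell_{i'}$, so the strict inequality indeed holds. The symmetric case $y\in W_{i+n}(x)$ is handled via the wedges $W_{i+n}(x)$ and $W_{j+n}(y)$, whose indices still differ by $j-i$.

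The delicate case is the residual one in which $y$ lies in the transverse double-wedge at $x$. Exchanging the roles of $x$ and $y$ and running the same analysis knocks out the sub-cases $x\in W_j(y)$ and $x\in W_{j+n}(y)$, leaving only the configuration where each of $x,y$ lies in the other's transverse double-wedge. I expect to exclude this configuration by a direct geometric argument: the two transverse double-wedges are prescribed by the edge directions of $Q$ around the vertices $q_i$ and $q_j$, and requiring both containments simultaneously with $0<j-i<n$ constrains the direction of the vector $y-x$ in incompatible ways. Alternatively, one may exploit the convexity of $\bigt$: picking an interior point of the segment $\overline{xy}\subseteq\bigt$ and using monotonicity of wedge sizes along this segment should reduce to a configuration already handled. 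Eliminating this last sub-case completes the proof, and it is the step I anticipate as the principal technical obstacle.
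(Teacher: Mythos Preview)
Your overall strategy and case split coincide with the paper's proof. The one genuinely unfinished step is the residual sub-case where each of $x,y$ lies in the other's transverse double-wedge. This sub-case is in fact \emph{vacuous}, and your first suggested route (analyzing the direction of $y-x$) is exactly how the paper disposes of it: if $y\notin W_i(x)\cup W_{i+n}(x)$, then the ray from $x$ through $y$ lies strictly between the ray parallel to $q_{i-1}q_i$ and the ray parallel to $q_iq_{i+1}$ (or between their antipodes). Reversing the direction, the ray from $y$ through $x$ then lies in the sector at $y$ bounded by the rays parallel to $q_{i+n-1}q_{i+n}$ and $q_{i+n}q_{i+n+1}$ (respectively, its antipode). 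Since $j\in[i+1,i+n-1]$, this sector is contained in $W_{j+n}(y)$ (respectively, $W_j(y)$). Hence $x\in W_j(y)\cup W_{j+n}(y)$ automatically, and you are back in a case already handled by symmetry. No separate argument using convexity of $\bigt$ or interior points of $\overline{xy}$ is needed.

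Two minor remarks on the main cases. First, your aside about needing ``strictly to the right of $\ell_{i'}$'' is unnecessary: Lemma~\ref{lem:twowedges} gives at most $2r+2$ points \emph{strictly} left of the line through $y$, while $y\in\bigt$ forces this line to lie on or to the right of $\ell_{i'}$, whose \emph{closed} left halfplane already contains $2r+3$ points; the point of $S$ on $\ell_{i'}$ itself is strictly left of any parallel line through an interior point of $\bigt$, and if $y\in\ell_{i'}$ general position still leaves $2r+2$ strictly left plus one on the line, which is harmless since the contradiction only needs the line through $y$ to have at least $2r+3$ in the closed halfplane. Second, in the case $y\in W_{i+n}(x)$ you pair it with $W_{j+n}(y)$; note that $(j+n)-(i+n)=j-i\in(0,n)$, so Lemma~\ref{lem:twowedges} indeed applies with index range $[i+n,j+n-1]$, as you implicitly use.
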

\begin{proof}
By contradiction, suppose that $y\in\bigw_i^{<r}\cap\bigw_{i{+}n}^{<r}\cap\bigt$ and 
$z\in\bigw_j^{<r}\cap\bigw_{j{+}n}^{<r}\cap\bigt$, with $j\not= i{+}n$.

First, let us suppose that $z\in W_i(y)\cup W_{i{+}n}(y)$, and focus on the case $z\in W_i(y)$. Trivially, $z\in W_j(z)$. Thus Lemma~\ref{lem:twowedges} implies that for all $i'\in[i,j{-}1]$, the oriented line with direction $q_{i'}q_{i'{+}1}$ through~$z$ has at most~$2r{+}2$ points of~$S$ strictly to its left, contradicting $z\in\bigt$. The case $z\in W_{i{+}n}(y)$ works analogously.

On the other hand, if $z\not\in W_i(y)\cup W_{i{+}n}(y)$, then we claim that $y\in W_j(z)\cup W_{j{+}n}(z)$ and a similar argument leads to a contradiction. 
In order to prove the claim, consider the rays from $y$ parallel to $q_sq_{s{+}1}$ for $s\in [0,2n-1]$. Then, $z\not\in W_i(y)\cup W_{i{+}n}(y)$ implies that $y$ lies (counterclockwise) between either the pair of rays parallel to $q_{i{-}1}q_{i}$ and to $q_{i}q_{i{+}1}$, or the pair parallel to $q_{i{+}n{-}1}q_{i{+}n}$ and to $q_{i{+}n}q_{i{+}n{+}1}$. Given $j\in[i{+}1,i{+}n{-}1]$, we have $y\in W_{j{+}n}(z)$ in the first case, and $y\in W_j(z)$ in the second case (see Figure~\ref{fig:pflemoneinter}).
\end{proof}

\begin{figure}[htb]
\begin{center}
\includegraphics[scale=.4, angle=-90]{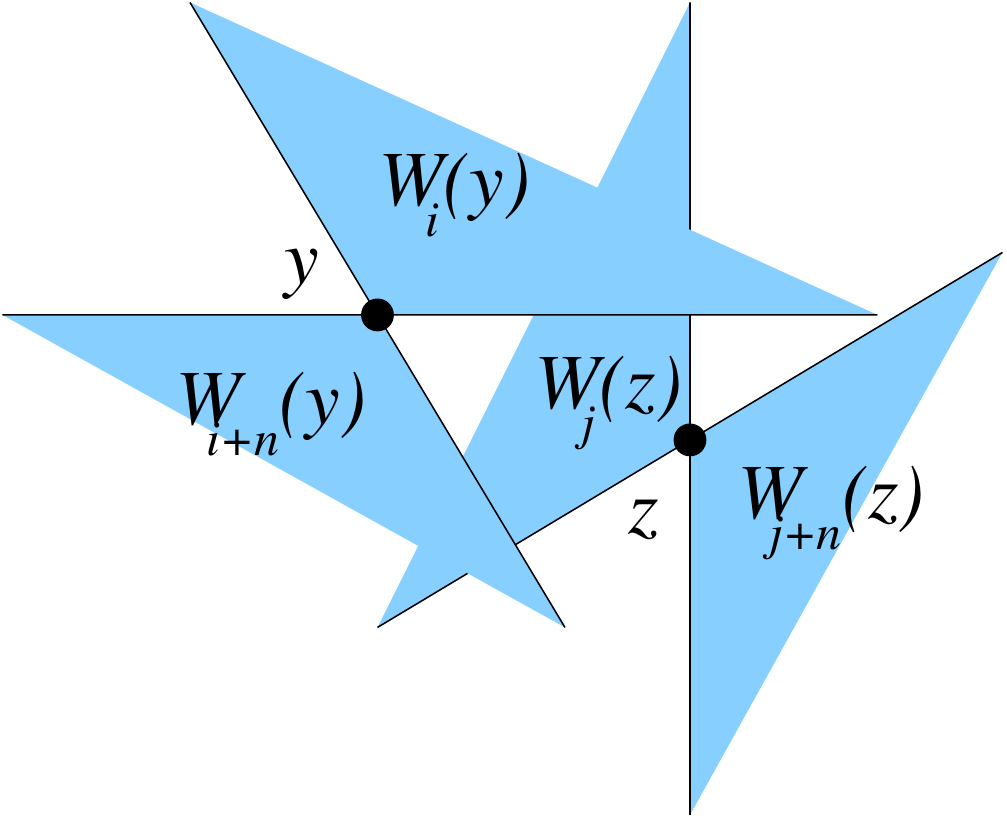}
\end{center}
\caption{\label{fig:pflemoneinter}Proof of Lemma~\ref{lem:oneinter}.}
\end{figure}

\begin{lemma}\label{lem:C_and_T}
If  $\bigc_i(r)$ intersects the interior of
$\bigt$, then it intersects the boundary of $\bigt$ at exactly two distinct lines.
We denote the lines by $\ell_{a_i}$ and $\ell_{b_i}$, so that $\ell_{a_{i}}$, $v_{i}$ and $\ell_{b_{i}}$ appear in counterclockwise order on the boundary of $\bigt$.
\end{lemma}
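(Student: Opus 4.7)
The plan is to prove the lemma in three stages.

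First, $\bigc_i(r)$ is a connected unbounded monotone staircase with two infinite tails, while $\bigt$ is a bounded convex polygon, so any intersection of $\bigc_i(r)$ with the interior of $\bigt$ must be accompanied by crossings of $\partial\bigt$. The main substep is to show that $\bigc_i(r)\cap\bigt$ is a single connected arc, so that it exits $\bigt$ at exactly two points. I would exploit that $\bigw_i^{\geq r}$ is closed under translation in the $q_i$ direction (moving an apex along $q_i$ can only enlarge $W_i(x)$, hence increase its point count), and symmetrically that $\bigw_i^{<r}$ is closed in the $-q_i$ direction. Combined with the convexity of $\bigt$, each fiber of $\bigt$ parallel to $q_i$ is split by the staircase into an upper sub-segment in $\bigw_i^{\geq r}$ and a lower one in $\bigw_i^{<r}$. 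Since $v_i\in\bigw_i^{<r}$ (as noted in the paragraph preceding Lemma~\ref{lem:antipodal}) and $\bigt$ is convex, this fiber structure implies that $\bigt\cap\bigw_i^{<r}$ and $\bigt\cap\bigw_i^{\geq r}$ are each connected, so their common boundary within $\bigt$ is a single connected arc of $\bigc_i(r)$ with two endpoints on $\partial\bigt$.

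Second, I would show that these two endpoints lie on two distinct supporting lines $\ell_{a_i}$ and $\ell_{b_i}$ of $\bigt$. In general position the crossings are transversal, so each endpoint lies on some $\ell_j$ not parallel to the adjoining staircase edge. If both endpoints lay on the same line $\ell_j$, then the arc together with the portion of $\ell_j\cap\partial\bigt$ between the endpoints would enclose a region of $\bigt$ strictly on one side of the staircase; combined with the upward-closure of $\bigw_i^{\geq r}$ and the fact that $v_i$ must lie on the opposite side along $\partial\bigt$, a short case analysis on which side of the arc contains $v_i$ rules this out.

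Third, the counterclockwise ordering is immediate from $v_i\in\bigw_i^{<r}\cap\partial\bigt$: the two crossings partition $\partial\bigt$ into two arcs, one lying in $\bigw_i^{<r}$ and the other in $\bigw_i^{\geq r}$, and $v_i$ must lie on the former. Labeling the two lines so that $\ell_{a_i}, v_i, \ell_{b_i}$ appear in counterclockwise order completes the proof.

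The main obstacle is establishing the connectedness of $\bigc_i(r)\cap\bigt$ in the first stage: a generic monotone staircase can cross a generic convex region several times, so the proof must essentially combine the upward-closure of $\bigw_i^{\geq r}$ in the $q_i$ direction with the convexity of $\bigt$ (and the placement of $v_i$) to rule out re-entries of the staircase.
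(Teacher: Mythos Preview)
Your overall strategy—prove connectedness of $\bigc_i(r)\cap\bigt$ first, then deduce two boundary lines—is the \emph{reverse} of the paper's logic. The paper proves the two-lines statement directly in this lemma and only afterwards (in the next lemma) deduces connectedness of $\bigc'_i(r)$ from it. More importantly, your Stage~1 has a genuine gap.

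You correctly observe that $\bigw_i^{\geq r}$ is closed under translation in the $q_i$ direction, so each fiber of $\bigt$ in that direction is split into an upper segment in $\bigw_i^{\geq r}$ and a lower one in $\bigw_i^{<r}$. But this fiber structure together with convexity of $\bigt$ does \emph{not} force $\bigt\cap\bigw_i^{\geq r}$ to be connected: an epigraph $\{y\geq h(x)\}$ with a tall spike in $h$ can meet a disk in two pieces, even though every vertical fiber is split into one upper and one lower segment. Knowing only that $v_i\in\bigw_i^{<r}$ does not rule this out. You yourself flag this as ``the main obstacle,'' and the ingredients you list are not enough to close it.

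What the paper does instead is short and avoids connectedness entirely. Fix any $x\in\bigc_i(r)\cap\bigt$. By Lemma~\ref{lem:vis} one has $v_i\in W_i(x)$ and $v_{i+n}\in W_{i+n}(x)$. The two lines through $x$ in the edge directions $q_iq_{i-1}$ and $q_iq_{i+1}$ are common supporting lines of the opposite wedges $W_i(x)$ and $W_{i+n}(x)$; by monotonicity of the staircase each of these lines meets $\bigc_i(r)$ only at $x$. Hence the two unbounded tails of $\bigc_i(r)$ from $x$ lie in the two \emph{complementary} wedges at $x$, and since $\bigt$ is bounded they must cross $\partial\bigt$ there, giving $a_i\in[i{-}n,i{-}1]$ and $b_i\in[i,i{+}n{-}1]$. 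Because this localization holds for \emph{every} $x$ on $\bigc_i(r)\cap\bigt$, the boundary crossings are confined to exactly two lines. The key ingredient you are missing is the simultaneous use of $v_{i+n}$ (not just $v_i$) together with the wedge decomposition at $x$; this is what pins down the two exit sides without ever needing connectedness.
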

\begin{proof}

Take any point $x$ on $\bigc_i(r)\cap\bigt$. We have $v_i \in W_i(x)$ and $v_{i{+}n} \in W_{i{+}n}(x)$, and each of the common
supporting lines of those two wedges  properly intersects $\bigc_i$ only once. 
So each of the two wedges complementary to the union of $W_i(x)$ and $W_{i{+}n}(x)$ contains at least one
intersection of $\bigc_i$ with the boundary of $\bigt$. 
This implies that  $a_i \in [i{-}n,i{-}1]$ and $b_i\in [i,i{+}n{-}1]$.
Note that since  the property is valid for all $x$, intersections occur only on the lines $\ell_{a_{i}}$ and $\ell_{b_{i}}$.
\end{proof}

\noindent Lemma~\ref{lem:C_and_T} implies that every $\bigc_i(r)$ intersecting $\bigt$ is such that $i \in [a_i{+}1,b_i]$.
Let $\bigc'_i(r)$ be the portion of $\bigc_i(r)$ contained in $\bigt$:
$$
\bigc'_i(r) := \bigc_i(r)\cap\bigt .
$$

\begin{lemma}
(i) The curve $\bigc'_i(r)$ is connected. 
(ii) If $a_i\neq i{-}1$, then $\bigc'_j(r)$ is empty for $j \in [a_i{+}1, i{-}1]$. 
(iii) If $b_i \neq i$, then $\bigc'_j(r)$ is empty for $j \in [i{+}1,b_i]$.
\end{lemma}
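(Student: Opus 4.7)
For part~(i), I pick any point $z\in\bigc'_i(r)$ and split $\bigc_i(r)$ at $z$ into two ``arms''. By the same monotonicity observation used in the proof of Lemma~\ref{lem:C_and_T}, each arm lies inside one of the two open wedges complementary to $W_i(z)\cup W_{i+n}(z)$. Since the edges of each arm are parallel to $q_{i-1}q_i$ and $q_iq_{i+1}$ and the traversal directions both lie inside that complementary wedge, each arm is monotone along any direction in its interior; combined with $\bigt$ being a bounded convex polygon (intersection of the $2n$ halfplanes $L_i$), each arm exits $\bigt$ at most once and never returns, so its intersection with $\bigt$ is connected. Concatenating at $z$ yields connectedness of $\bigc'_i(r)$. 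Part~(iii) follows from~(ii) by a completely parallel argument in which the roles of $a_i$ (entry of $\bigc_i(r)$ into $\bigt$) and $b_i$ (exit) are swapped.

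For part~(ii), I argue by contradiction. Assume $y\in\bigc_j(r)\cap\bigt$ for some $j\in[a_i+1,i-1]$; by Observation~\ref{obs:rorrp1}, $|W_j(y)\cap S|\leq r+1$. Pick $x\in\bigc_i(r)\cap\ell_{a_i}$ (which exists by Lemma~\ref{lem:C_and_T}), so $|W_i(x)\cap S|\leq r+1$ and $x\in\ell_{a_i}\cap\partial\bigt$. The plan is to exhibit a point $p\in W_j(y)\cap W_i(x)$ lying strictly on the $\bigt$-side of $\ell_j$ (in the interior of $L_j$). Then, applying Lemma~\ref{lem:twowedges} (its $i,j$ being our $j,i$, which satisfies $0<i-j<n$) with $i'=j\in[j,i-1]$, the line through $p$ parallel to $\ell_j$ has at most $(r+1)+(r+1)=2r+2$ points of $S$ strictly to its left. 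But since $p$ is strictly interior to $L_j$, this parallel line lies strictly to the right of $\ell_j$, and by the construction of $\ell_j$ (closed left halfplane of size $2r+3$, with exactly one point on $\ell_j$ by general position) at least $2r+3$ points of $S$ lie strictly to its left, a contradiction. The natural candidate is $p=y$: one has $p\in W_j(y)$ trivially, and generically $p$ is interior to $L_j$, so the remaining step is to verify $y\in W_i(x)$, which should follow from the angular arrangement enforced by $x\in\ell_{a_i}$, Lemma~\ref{lem:vis} (which gives $v_i\in W_i(x)$), and the index range $j\in[a_i+1,i-1]$.

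The main obstacle is verifying $y\in W_i(x)$ in part~(ii). Since $W_i(x)$ is a cone of opening strictly less than $\pi$, it does not contain all of $\bigt$, so the confinement of $y$ inside $W_i(x)$ must come from the index bound: the tangent lines $\hat\ell_{i-1}$ and $\hat\ell_i$ that define $v_i$, the orientation of $\ell_{a_i}$, and the range $j\in[a_i+1,i-1]$ should jointly pin $y$ into the correct sector of $\bigt$. Degenerate configurations where $y$ lies on $\ell_j\cap\partial\bigt$, so that $p=y$ is not strictly in the interior of $L_j$, would likely require choosing the common point differently, for instance taking $p=x$ paired with an auxiliary wedge through~$y$ whose index is chosen to keep $a_i$ outside the relevant range and $j$ inside it.
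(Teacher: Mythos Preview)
Your treatment of~(i) is correct, though more elaborate than needed: the paper simply observes that $\bigc_i(r)$ is an unbounded monotone curve and, by Lemma~\ref{lem:C_and_T}, meets $\partial\bigt$ only on the two lines $\ell_{a_i}$ and $\ell_{b_i}$; since $\bigt$ is convex this already forces $\bigc_i(r)\cap\bigt$ to be connected.

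For~(ii) and~(iii) there is a genuine gap. You attempt to apply Lemma~\ref{lem:twowedges} directly to the pair $(W_j(y),W_i(x))$ with $x=\bigc_i(r)\cap\ell_{a_i}$, and this hinges on the containment $y\in W_i(x)$, which you yourself flag as the ``main obstacle'' and leave unproven. This is not a mere technicality: $W_i(x)$ is a cone of opening less than~$\pi$ anchored at one specific boundary point of~$\bigt$, and the hypothesis $j\in[a_i+1,i-1]$ constrains where the vertex $v_j$ sits on $\partial\bigt$, not where an arbitrary $y\in\bigc'_j(r)$ sits inside~$\bigt$. In effect you are trying to re-derive, at a particular pair of apices, the content of Lemma~\ref{lem:antipodal} from Lemma~\ref{lem:twowedges}, when Lemma~\ref{lem:antipodal} is already available and is exactly the tool the paper invokes.

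The paper's route is short: use Lemma~\ref{lem:C_and_T} together with Lemma~\ref{lem:antipodal}. Concretely, $\bigc'_i(r)$ separates $\bigt$ into two pieces, and by Lemma~\ref{lem:C_and_T} the piece $\bigw_i^{<r}\cap\bigt$ meets $\partial\bigt$ exactly along the counterclockwise arc from $\ell_{a_i}$ through $v_i$ to $\ell_{b_i}$. For $j\in[a_i+1,i-1]$ the vertex $v_j$ lies on that arc, hence $v_j\in\bigw_i^{<r}\cap\bigt$. If $\bigc'_j(r)$ were nonempty then, since $W_j(v_j)\subseteq W_j(y)$ for any $y\in\bigc'_j(r)$, we would also have $v_j\in\bigw_j^{<r}$, putting $v_j\in\bigw_i^{<r}\cap\bigw_j^{<r}\cap\bigt$ and contradicting Lemma~\ref{lem:antipodal}. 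So the fix is to work with $v_j$ and cite Lemma~\ref{lem:antipodal}, rather than chase the unverified containment $y\in W_i(x)$.
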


\begin{proof}
Statement (i) follows directly from the fact that $\bigc_i(r)$
is an unbounded curve and intersects $\bigt$ at most twice
(Lemma~\ref{lem:C_and_T}). 
Statements (ii) and (iii) follow from Lemma~\ref{lem:C_and_T} and Lemma~\ref{lem:antipodal}.
\end{proof}

\begin{observation}
\label{obs:witnessvi}
If $\bigc'_i(r)$ is empty, then any $i$-wedge $W_i(x)$ for $x\in\bigt$
contains at least $r$ points of $S$. In particular, $W_i(v_i) \subseteq W_i(x)$ and $|W_i(v_i)\cap S|\geq r$.
\end{observation}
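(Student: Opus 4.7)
The first claim, $W_i(v_i)\subseteq W_i(x)$ for every $x\in\bigt$, is immediate from Lemma~\ref{lem:vis}. Since the lemma provides $v_i\in W_i(x)$ and $W_i(\cdot)$ is the translate of a fixed closed convex cone $C$, we have $v_i+C\subseteq x+C=W_i(x)$, i.e.\ $W_i(v_i)\subseteq W_i(x)$. This reduces the observation to showing $|W_i(v_i)\cap S|\geq r$, because then $|W_i(x)\cap S|\geq|W_i(v_i)\cap S|\geq r$ for every $x\in\bigt$.

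For the count, my plan is a connectedness-plus-contradiction argument. The set $\bigt$ is convex, hence connected, and the hypothesis $\bigc_i(r)\cap\bigt=\emptyset$ places $\bigt$ in a single connected component of $\R^2\setminus\bigc_i(r)$. So either $\bigt\subseteq\bigw_i^{\geq r}$ (the desired conclusion) or $\bigt\subseteq\mathrm{int}(\bigw_i^{<r})$, in which case $|W_i(v_i)\cap S|\leq r-1$. I must rule out the second case.

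Here I exploit the structural identity $W_i(v_i)=\bigl(\text{closed left of }\hat\ell_i\bigr)\cap\bigl(\text{closed left of }\hat\ell_{i-1}\bigr)$. Since $\hat\ell_j$ is displaced toward $\bigt$ relative to $\ell_j$, the closed left half-plane of $\hat\ell_j$ contains that of $\ell_j$ and hence at least $2r+3$ points of $S$. If $|W_i(v_i)\cap S|\leq r-1$, then at least $(2r+3)-(r-1)=r+4$ of the points left of $\hat\ell_i$ must lie strictly right of $\hat\ell_{i-1}$, forcing a large population in the side wedge at $v_i$ bounded by $\hat\ell_i$ and $\hat\ell_{i-1}$; a symmetric statement gives $r+4$ points in the opposite side wedge.

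The main obstacle, which I expect to be the technical heart of the proof, is converting these forced side-wedge populations into a contradiction. My plan is to locate an apex $z$ and an index $j$ with $0<(j-i)<n$ such that $v_i\in W_j(z)$ and $|W_j(z)\cap S|\leq r+2$; Lemma~\ref{lem:twowedges} applied at $x=v_i$ would then bound the number of points strictly left of some $\ell_{i'}$-parallel line through $v_i$ by $(r-1)+(r+2)=2r+1$, contradicting the fact that $v_i\in\bigt$ forces at least $2r+2$ such points via the $2r+3$-point definition of $\ell_{i'}$. Identifying the right $(j,z)$ from the forced side-wedge populations, and handling the boundary cases where $v_i$ lies exactly on some $\ell_{i'}$, is where I expect the remaining effort to concentrate.
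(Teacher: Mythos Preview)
The paper states this as an observation and supplies no proof. Your first two paragraphs are correct and capture exactly what the paper leaves implicit: $W_i(v_i)\subseteq W_i(x)$ is immediate from Lemma~\ref{lem:vis}, and since $\bigt$ is convex and disjoint from $\bigc_i(r)$, it lies entirely in $\bigw_i^{\geq r}$ or entirely in $\mathrm{int}(\bigw_i^{<r})$. Ruling out the second alternative is indeed the only content.

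Your plan for that exclusion, however, does not close, and the gap is not a detail. From $|W_i(v_i)\cap S|\le r-1$ you correctly force at least $r+4$ points into each of the two ``side'' regions at $v_i$ cut out by $\hat\ell_{i-1}$ and $\hat\ell_i$. You then hope to produce an index $j$ with $0<j-i<n$ and an apex $z$ such that $v_i\in W_j(z)$ and $|W_j(z)\cap S|\le r+2$, so that Lemma~\ref{lem:twowedges} applied at $v_i$ contradicts $v_i\in\bigt$. But the side-region count is a \emph{lower} bound on the number of points in a region, whereas what Lemma~\ref{lem:twowedges} needs is an \emph{upper} bound on $|W_j(z)\cap S|$; nothing in your argument converts one into the other. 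A side region with many points says nothing about the existence of a sparse $j$-wedge covering $v_i$. The sentence ``Identifying the right $(j,z)$ \ldots is where I expect the remaining effort to concentrate'' is precisely the whole proof, not a residual technicality, and as stated there is no reason to believe such $(j,z)$ exist.

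If you want to salvage the approach, it is cleaner to go through Lemma~\ref{lem:antipodal} rather than to hunt for a sparse wedge directly: assuming $\bigt\subseteq\mathrm{int}(\bigw_i^{<r})$, Lemma~\ref{lem:antipodal} gives $\bigw_j^{<r}\cap\bigt=\emptyset$ for every $j\neq i,i{+}n$, so every $W_j(x)$ with $x\in\bigt$ already has at least $r$ points for those $j$. One is then left only with controlling the pair $\{i,i{+}n\}$, which is a much more constrained situation than the blind search your proposal describes.
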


The combinatorial properties described in this section lay the foundations for the definition of a set of witness wedges in Section~\ref{sec:wedges}.

\section{Witness Wedges}
\label{sec:wedges}

We now describe a set of wedges, parameterized by a real number
$t\in[0,2n)$ with apex at point $x(t)$ and  
$type(t) = \lfloor t \rfloor$. We abbreviate $W_{type(t)}(x(t)) = W(t)$.
This set of wedges is such that any $i$-wedge containing at
least $3r{+}5$ points contains a witness wedge $W(t)$.
Thus it suffices to color only those witness wedges.

The wedge $W(t)$ will have its apex on $\bigc'_i(r)$ for $t\in
[i,i{+}1)$ if $\bigc'_i(r)$ is not empty.
More precisely, we let $\sigma_i(t)$, $t\in [i,i{+}1)$ be a parametrization of $\bigc'_i(r)$, 
where $\sigma_i(i) := \ell_{a_i}\cap\bigc'_i(r)$ and 
$\sigma_i(t) := \ell_{b_i}\cap\bigc'_i(r)$ for $t\in[i{+}0.9 ,i{+}1)$.
If $\bigc'_i(r)$ is empty, then we distinguish three cases (see Figure~\ref{fig:param}):
\begin{itemize}
\item [A.] If there is a $j$ such that $i\in [a_j{+}1,j{-}1]$ then 
  $\sigma_i(t) := \bigc'_j(r)\cap\ell_{a_j}$ for $t\in[i,i{+}1)$.
\item [B.] If there is a $j$ such that $i\in [j{+}1,b_j]$ then 
  $\sigma_i(t) := \bigc'_j(r)\cap\ell_{b_j}$ for $t\in[i,i{+}1)$.
\item [C.] Otherwise, $\sigma_i(t) := v_i$ for $t\in[i,i{+}1)$.
\end{itemize}
We define $x(t)$ as the concatenation of the functions $\sigma_i(t)$:
$$
x(t) := \sigma_{\lfloor{t}\rfloor}(t).
$$

\begin{figure}[htb]
\begin{center}
\includegraphics[scale=.7]{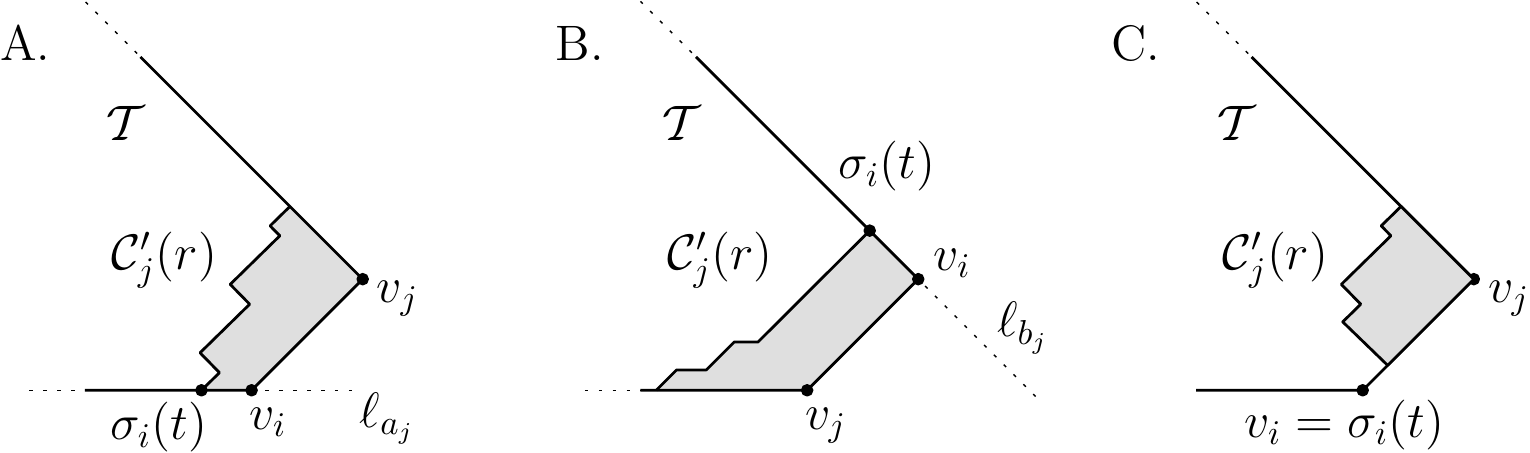}
\end{center}
\caption{\label{fig:param}Definition of $\sigma_{i}(t)$, when $\bigc'_i(r)$ is empty.}
\end{figure}

\begin{lemma}
\label{lem:witnesses}
For any wedge $W_i(y)$ that contains at least $3r{+}5$ points of $S$,
there is a value $t\in[i,i{+}1)$ such that $W(t)\subseteq W_i(y)$
and $W(t)$ contains at least $r$ points.
\end{lemma}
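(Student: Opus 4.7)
The plan rests on the observation that both $W(t)$ and $W_i(y)$ are type-$i$ wedges (since $t\in[i,i{+}1)$ forces $type(t)=i$), so the inclusion $W(t)\subseteq W_i(y)$ reduces to the apex condition $\sigma_i(t)\in W_i(y)$. The task is therefore to find an apex in $\{\sigma_i(t):t\in[i,i{+}1)\}$ lying inside $W_i(y)$ whose associated $i$-wedge contains at least $r$ points of $S$. I would split along the four branches defining $\sigma_i$.

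For the case $\bigc'_i(r)\neq\emptyset$, the witness apices sweep out all of $\bigc'_i(r)$. I would slide the apex of $W_i(y)$ in the opening direction of $W_i$, so that $W_i(y(s))\subseteq W_i(y)$ throughout. The point count drops step by step from at least $3r{+}5$ toward $0$, so by Observation~\ref{obs:rorrp1} there is some $s^*$ at which the sliding apex $x^*$ lies on $\bigc_i(r)$ with count $r$ or $r{+}1$, yielding a valid witness provided $x^*\in\bigt$. If instead $x^*$ sits on one of the two tails of $\bigc_i(r)$ outside $\bigt$, I would continue by walking along the monotone staircase $\bigc_i(r)$ into $\bigt$; the hypothesis $|W_i(y)\cap S|\geq 3r{+}5$, combined with the $2r{+}3$-point bound in the definition of $\ell_{a_i}$ and $\ell_{b_i}$ and Lemma~\ref{lem:twowedges}, should supply exactly the slack needed to keep the apex inside $W_i(y)$ until it crosses $\partial\bigt$, landing on $\bigc'_i(r)$.

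For the three degenerate cases where $\bigc'_i(r)$ is empty, the point-count certificate comes from Observation~\ref{obs:witnessvi}: in Case C the wedge $W_i(v_i)$ already contains at least $r$ points; in Cases A and B the witness apex is the endpoint $\bigc'_j(r)\cap\ell_{a_j}$ or $\bigc'_j(r)\cap\ell_{b_j}$ of a surviving curve, and the same observation, transferred through the appropriate direction comparisons, places $r$ points inside $W_i$ at that apex. What remains is the inclusion $\sigma_i(t)\in W_i(y)$. For Case C, Lemma~\ref{lem:vis} places $v_i$ inside every $i$-wedge with apex in $\bigt$, while the exclusion of $i$ from every $[a_j{+}1,j{-}1]$ and $[j{+}1,b_j]$ (via Lemmas~\ref{lem:antipodal} and~\ref{lem:C_and_T}) forbids an intervening level curve; the $3r{+}5$ count then suffices to force $v_i\in W_i(y)$. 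Cases A and B follow the same pattern with the witness point pinned to $\ell_{a_j}$ or $\ell_{b_j}$. I expect the main obstacle to be the non-empty case, specifically the second stage of the slide along $\bigc_i(r)$: precisely balancing the $3r{+}5$ point count against the $2r{+}3$ halfplane counts defining $\bigt$ so that the apex provably reaches $\bigc'_i(r)$ without leaving $W_i(y)$.
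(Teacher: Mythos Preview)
Your case split matches the paper's, and your treatment of Case~C is essentially the same. The substantive difference is in the non-empty case, where you make the problem harder than it needs to be.

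You propose to slide the apex until it lands on $\bigc_i(r)$, and then, if that landing point is outside $\bigt$, to walk along the staircase $\bigc_i(r)$ back into $\bigt$ while staying inside $W_i(y)$. You correctly flag this walk as the obstacle: nothing obvious prevents the staircase from exiting $W_i(y)$ before it reaches $\partial\bigt$, and the bookkeeping with $2r{+}3$ versus $3r{+}5$ to rule this out is delicate. The paper sidesteps this entirely by reversing the order of the two restrictions: first use the $3r{+}5$ count to conclude that $W_i(y)\cap\bigt\neq\emptyset$, pick any $z$ in that intersection, and only then look for the level curve. Since $v_i\in W_i(z)$ by Lemma~\ref{lem:vis} and $v_i$ lies on the $\bigw_i^{<r}$ side of $\bigc'_i(r)$ inside $\bigt$, the segment $zv_i\subseteq W_i(z)\cap\bigt$ must meet $\bigc'_i(r)$ whenever $|W_i(z)\cap S|\geq r$; and if $|W_i(z)\cap S|<r$ one can move $z$ within $W_i(y)\cap\bigt$ toward $y$ until it crosses $\bigc'_i(r)$. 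Either way no walk along the staircase outside $\bigt$ is ever needed, and your anticipated obstacle disappears.

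For Cases~A and~B your sketch is too vague to stand as a proof. The paper's argument is a clean point count that you should adopt: in Case~A the witness apex $p=\bigc'_j(r)\cap\ell_{a_j}$ has $W_j(p)$ containing at most $r{+}1$ points and the halfplane left of $\ell_{a_j}$ containing $2r{+}3$ points; since $i\in[a_j{+}1,j{-}1]$, both $W_i(p)$ and the halfplane left of the line through $p$ in direction $q_iq_{i+1}$ lie in the union of those two sets, hence contain at most $3r{+}4$ points. Because $|W_i(y)\cap S|\geq 3r{+}5$, the apex $y$ must lie to the right of that line, which gives $p\in W_i(y)$ and thus $W(i)=W_i(p)\subseteq W_i(y)$. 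The $\geq r$ count for $W(i)$ then follows from $\bigc'_i(r)=\emptyset$ via Observation~\ref{obs:witnessvi}. This is where the constant $3r{+}5$ is actually tight, not in the non-empty case.
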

\begin{proof}
Since $W_i(y)$ contains at least $3r{+}5$ points, it must intersect $\bigt$. Thus it contains a wedge
$W_i(z)$ such that $z\in\bigt$. 

First suppose that $\bigc'_i(r)$ is not empty. Then $\bigc'_i (r)\cap W_i(z)\not= \emptyset$, otherwise
$W_i(z)$ cannot contain enough points. 
If $x(t)\in \bigc'_i (r)\cap W_i(z)$, then $W(t)$
is contained in $W_i(y)$ and contains at least $r$ points. 

Now suppose $\bigc'_i(r)$ is empty and refer to the three cases above. 
In case C, from Observation~\ref{obs:witnessvi}, $W_i(v_i) \subseteq W_i(z)$ and
$W_i(v_i)$ contains at least $r$ points. Since in that case $x(t) =
v_i$ for $t\in[i,i{+}1)$, any value of $t$ in $[i,i{+}1)$ will work. 
In case A, note that  wedges $W(i)$ and $W(j)$ have the same apex 
$\bigc'_j(r)\cap\ell_{a_j}$, $W(j)$ contains at most $r{+}1$ points,
$\ell_{a_j}$ has $2r{+}3$ points on its left, and $W(i)$ is in the
union of $W(j)$ and the halfplane left of $\ell_{a_j}$.
This implies that both $W(i)$ and the halfplane to the left of the oriented
line of direction $q_{i}q_{i{+}1}$ through its apex have at most $3r{+}4$
points. Thus $W_i(y)$ has its apex to the right
of that line, which implies $W(i)\subseteq W_i(y)$. 
Because $\bigc'_i(r)$ is empty, $|W(i)\cap S|\geq r$.
Case B is identical.
\end{proof}

It is natural to view the range $[0,2n)$ as a counterclockwise
parametrization of the points on a unit circle. Thus in what follows,
the real parameter $t$ will be viewed modulo $2n$, and an interval
$[t,t']$ is the set of points on the circle on a counterclockwise walk
from $t$ to $t'$. 

\section{Reduction to Intervals}
\label{sec:inter}

Our goal is to color the points of $S$ with $k$ colors such that any witness wedge $W(t)$ contains at least one point of each color.
For each point $p$ in $S$, we consider the set $I(p)$ of witness wedges containing $p$:
$$
I(p) := \{ t\in [0,2n) : p\in W (t) \} .
$$

\begin{lemma}
\label{lem:oneinterval}
For any point $p\in S$, if $p\in W(t) \cap W(t')$, where $t$ appears
before $t'$ (that is,  $t'\notin [\lfloor t \rfloor,t]$ and 
$type(t')\in [type(t),type(t){+}n{-}1]$), then $p\in W(t'')$ for all $t''\in[t,t']$.
\end{lemma}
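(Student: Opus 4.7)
First, I would rewrite the containment condition using central symmetry of $Q$: one has $p\in W_s(x)$ iff $x\in W_{s+n}(p)$, so the lemma becomes the claim that if $x(t)\in W_{type(t)+n}(p)$ and $x(t')\in W_{type(t')+n}(p)$ then $x(t'')\in W_{type(t'')+n}(p)$ for every $t''\in[t,t']$. This reformulation is useful because it turns the problem into one about the path $t\mapsto x(t)$ meeting a family of wedges based at the fixed point $p$.

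The argument splits into two cases. In the same-type case $type(t)=type(t')=i$, if $\bigc'_i(r)$ is empty then $\sigma_i$ is constant on $[i,i{+}1)$ and the conclusion is immediate, so assume $\bigc'_i(r)$ is nonempty. Both apices $x(t)$ and $x(t')$ then lie on the monotone staircase $\bigc'_i(r)$, whose edges (parallel to $q_iq_{i-1}$ and $q_iq_{i+1}$) are exactly parallel to the bounding rays of $W_{i+n}(p)$, up to the sign-flip coming from central symmetry. Here I would use an elementary geometric fact: a monotone staircase with edge directions $d_1,d_2$ meets any wedge bounded by rays parallel to $d_1,d_2$ in a single connected sub-arc. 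In the $(d_1,d_2)$ coordinate basis the staircase has monotone coordinates, the wedge is an intersection of two coordinate-aligned halfplanes, and the preimage of each halfplane under the parametrization is a half-interval in $t$; the intersection of two half-intervals is an interval, giving the result.

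For the case $type(t)=i<j=type(t')$ with $j-i<n$, I would partition $[t,t']$ into sub-intervals separated by the integers in $(t,t']$. The same-type case already handles each sub-interval, so it suffices to prove $p\in W(s)$ at every intermediate integer $s$. The main tool is Lemma~\ref{lem:twowedges} applied to $p\in W_i(x(t))\cap W_j(x(t'))$: since Observation~\ref{obs:rorrp1} bounds each of these two wedges by $r{+}1$ points, the lemma yields that for every $i'\in[i,j-1]$ the line through $p$ in direction $q_{i'}q_{i'+1}$ has at most $2r{+}2$ points of $S$ strictly to its left, which locates $p$ relative to the lines $\ell_{i'}$ defining $\bigt$. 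A short case analysis on the definition of $\sigma_s$ (whether $\bigc'_s(r)$ is nonempty and $x(s)=\ell_{a_s}\cap\bigc'_s(r)$, whether one of Cases A or B applies giving $x(s)$ as a vertex of some $\bigc'_{j'}(r)$ on the boundary of $\bigt$, or Case C with $x(s)=v_s$) then yields $x(s)\in W_{s+n}(p)$ in each case, using Observation~\ref{obs:witnessvi} together with Lemmas~\ref{lem:C_and_T} and~\ref{lem:oneinter}.

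The main obstacle is precisely the behavior at the integer-$t$ transitions, where the apex $x(t)$ may jump discontinuously: the argument cannot rely on continuity of $x(\cdot)$ and must instead bypass the jumps by exploiting the geometric structure of $\bigt$ and the parametrization. The hypothesis $j-i<n$ is essential here: it confines all intermediate wedges $W_{i'+n}(p)$ to a common angular half-turn at $p$, which is exactly the range in which Lemma~\ref{lem:twowedges} is applicable and in which $p$ cannot be ``behind'' the apex from the wedge's perspective for any intermediate type.
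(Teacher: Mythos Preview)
Your same-type case is fine and essentially coincides with the paper's: monotonicity of the staircase $\bigc'_i(r)$ gives that $W(t'')\supseteq W(t)\cap W(t')\ni p$.

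The different-type case, however, has two genuine gaps.

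\textbf{First}, your invocation of Observation~\ref{obs:rorrp1} to bound $|W(t)\cap S|$ and $|W(t')\cap S|$ by $r{+}1$ is not valid in general. That observation applies only when the apex lies on the relevant level curve $\bigc_i(r)$. When $\bigc'_i(r)$ is empty and $x(t)$ is defined via case~A, B, or~C, the wedge $W(t)=W_i(x(t))$ need not have $\le r{+}1$ points; the proof of Lemma~\ref{lem:witnesses} in fact only gives $\le 3r{+}4$ in case~A. With that weaker bound, Lemma~\ref{lem:twowedges} yields lines through $p$ with up to $6r{+}8$ points to the left, which says nothing useful relative to the threshold $2r{+}3$ defining $\bigt$.

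\textbf{Second}, even granting the conclusion of Lemma~\ref{lem:twowedges} in the generic situation, your ``short case analysis'' showing $x(s)\in W_{s+n}(p)$ for each intermediate integer $s$ is where all the work lies, and you have not supplied it. Knowing that $p$ lies to the left of the lines $\ell_{i'}$ for $i'\in[i,j{-}1]$ locates $p$ outside $\bigt$ in those directions, but it does not by itself pin down the position of $x(s)$ relative to $p$ in the way you need; citing Observation~\ref{obs:witnessvi} and Lemmas~\ref{lem:C_and_T},~\ref{lem:oneinter} does not bridge this.

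The paper's proof supplies exactly the missing geometric idea. From $p\in W(t)\cap W(t')$ with different types and both apices in $\bigt$, it argues (via Lemma~\ref{lem:antipodal}) that $x(t)\notin W(t')$ and $x(t')\notin W(t)$. This forces the counterclockwise bounding ray $u$ of $W(t)$ to cross the clockwise bounding ray $u'$ of $W(t')$, trapping $p$ in the wedge $V$ right of $u$ and left of $u'$. The crucial observation is then that for any $q$ in the \emph{opposite} wedge $V'$, one has $W_j(q)\supseteq V\ni p$ for every intermediate type $j$. The proof is completed by checking, case by case (level-curve endpoint, cases~A/B/C), that each transition apex $x(\lfloor t\rfloor{+}0.9)$ and $x(\lfloor t\rfloor{+}1)$ lies in $V'$, and then recursing on the shorter subrange. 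Your dual reformulation $p\in W_s(x)\Leftrightarrow x\in W_{s+n}(p)$ is a clean way to phrase this same conclusion, but the substance---identifying $V'$ and placing the transition apices inside it---is what your sketch is missing.
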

\begin{proof}
There are several cases to consider. If $type(t)=type(t')$, then either
$W(t)=W(t')$, or $x(t'')$ lies on $\bigc'_i(r)$ between $x(t)$
and $x(t')$. Since $\bigc'_i(r)$ is  monotone in all directions between
$q_iq_{i{-}1}$ and $q_iq_{i{+}1}$, the wedge $W(t'')$ contains the
intersection of $W(t)$ and $W(t')$.

In the second case, $type(t') \in [type(t){+}1,type(t){+}n{-}1]$. 
Then by Lemma~\ref{lem:antipodal}, $p\notin \bigt$. Also, because
$x(t)$ and $x(t')$ are in $\bigt$ and by the same lemma, we know
$x(t)\notin W(t')$ and $x(t')\notin W(t)$. This implies that the
counterclockwise bounding ray $u$ of $W(t)$ intersects the clockwise
bounding ray $u'$ of $W(t')$ and $p$ is in the closed wedge $V$ right
of $u$ and left of $u'$ (see Figure~\ref{fig:lemma8}). Then for any point $q$ in the closed wedge
$V'$ opposite to $V$, the wedges $W_j(q)\supset V$ for
$j\in[type(t),type(t')]$.  

\begin{figure}[htb]
\begin{center}
\includegraphics{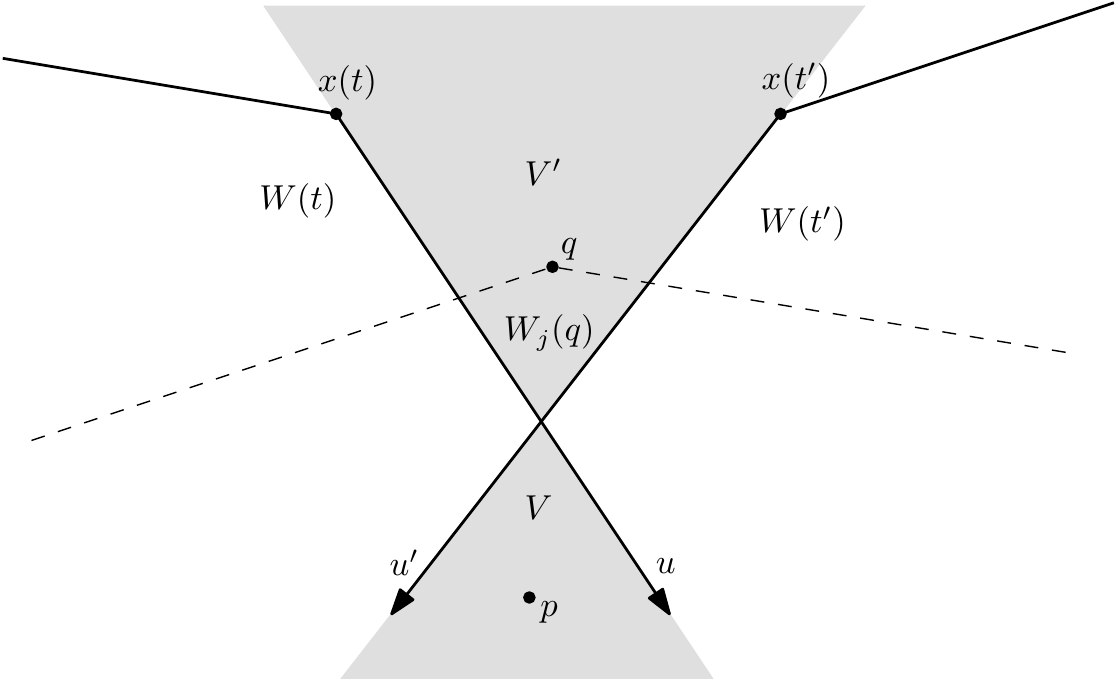}
\end{center}
\caption{\label{fig:lemma8}Second case of Lemma~\ref{lem:oneinterval}.}
\end{figure}

In the remainder of this proof, we will show that $p\in W(t'')$ for
all $t''\in [t,t']$ where $t''$ is an integer or an integer plus
$0.9$. Then the lemma follows by applying the first case for every
other value of $t''$.
In fact, it will suffice to show that $p\in W(\lfloor t \rfloor {+}0.9)$
and $p\in W(\lfloor t \rfloor {+}1)$ (or symmetrically that 
$p\in W(\lceil t' \rceil {-}1.1)$ and $p\in W(\lceil t' \rceil {-}1)$) and
apply the lemma again on the subrange $[t,t'']$ (or $[t'',t']$).

If $x(t)$ lies on $\bigc'_{type(t)}(r)$, then so does 
$x(\lfloor t\rfloor{+}0.9) = x(\lfloor t\rfloor{+}1)$. 
So $W(\lfloor t\rfloor{+}0.9)$ contains at most $r{+}1$ points. This
implies, by the same argument as above, that the ray $u$ intersects
the clockwise bounding ray of $W(\lfloor t\rfloor{+}0.9)$ and that ray
$u'$ intersects the counterclockwise bounding ray of 
$W(\lfloor t\rfloor{+}0.9)$. Therefore, 
$x(\lfloor t\rfloor{+}0.9)\in V'$ 
(and so $x(\lfloor t\rfloor{+}1)\in V'$), which implies 
$p\in W(\lfloor t\rfloor{+}0.9)$ and
$p\in W(\lfloor t\rfloor{+}1)$. 
The case where $x(t')$ lies on $\bigc'_{type(t')}(r)$ is covered symmetrically.

If $\bigc'_{type(t)}(r)$ is empty then $W(\lfloor t \rfloor {+}0.9)=W(t)$. 
Furthermore, if $x(t)$ is defined according to case A,
$x(\lfloor t \rfloor {+}1)=x(t)\in V'$ thus 
$p\in W(\lfloor t \rfloor {+}1)$.
If $x(t)$ is defined according to case B, then either 
$x(\lfloor t \rfloor {+}1)=x(t)$ and we are done, or $type(t)=b_j$ which
we treat below. 

Finally, we are left with the case where $x(t)$ is defined according
to case C or it is defined according to case B and $type(t)=b_j$. By
symmetry, we also assume that $x(t')$ is either defined according to
case C or it is defined according to case A and $type(t')=a_j{+}1$.
Note that in this case, the entire portion of the boundary of $\bigt$
between $x(t)$ and $x(t')$ is inside $V'$. This implies again that 
$p\in W(\lfloor t \rfloor {+}1)$.\end{proof}

As a consequence, a point defines either an interval, or a pair of intervals, the corresponding wedges of which are of two types $i$ and $i{+}n$.

\begin{corollary}
\label{cor:interval}
$I(p)$ is either an interval, or a pair of intervals $I_1(p) ,I_2(p)$, such that $type (t)=i$ for $t\in I_1(p)$ and $type(t)=i{+}n$ for $t\in I_2(p)$, where $i$ is such that $\bigw_i^{<r}$ and $\bigw_{i{+}n}^{<r}$ intersect in $\bigt$.
\end{corollary}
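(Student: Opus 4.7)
The plan is to derive this corollary by combining the connectedness supplied by Lemma~\ref{lem:oneinterval} with the uniqueness statement of Lemma~\ref{lem:oneinter}. View $[0,2n)$ as a circle and decompose $I(p)$ into its finitely many maximal closed arcs $I_1,\ldots,I_m$.

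First I would prove that $m\le 2$ and that, when $m=2$, the two arcs carry antipodal types. Suppose $s\in I_k$ and $s'\in I_{k'}$ with $k\ne k'$. Applying Lemma~\ref{lem:oneinterval} counterclockwise from $s$ to $s'$ rules out $type(s')\in[type(s),type(s){+}n{-}1]$, and applying it in the reverse cyclic direction from $s'$ to $s$ rules out $type(s)\in[type(s'),type(s'){+}n{-}1]$. The only remaining possibility modulo $2n$ is $type(s')\equiv type(s){+}n$. Hence the types used in distinct arcs come from a single antipodal pair $\{i,i{+}n\}$, and each individual arc carries a unique type: two distinct types occurring in the same arc would each be forced antipodal to every type in any other arc, which is impossible. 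Since the antipodal pair contains only two types, three arcs would force two of them to share a type; but within a single type $j$, the first case of the proof of Lemma~\ref{lem:oneinterval} (monotonicity of $\bigc'_j(r)$) shows $I(p)\cap[j,j{+}1)$ is itself an interval, so same-type arcs would merge, contradicting maximality. Thus $m\le 2$, and if $m=2$ the arcs carry single antipodal types $i$ and $i{+}n$.

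To identify this $i$ with the index singled out by Lemma~\ref{lem:oneinter}, I would exhibit a point of $\bigw_i^{<r}\cap\bigw_{i{+}n}^{<r}\cap\bigt$, after which uniqueness in that lemma forces the identification. Pick $t_1\in I_1$ and $t_2\in I_2$. Central symmetry of $Q$ yields the duality $p\in W_i(x)\iff x\in W_{i{+}n}(p)$, so $x(t_1){-}p\in W_{i{+}n}(0)$ and $x(t_2){-}p\in W_i(0)$. Adding these, $x(t_2){-}x(t_1)$ is a sum of two vectors in the convex cone $W_i(0)$ and thus lies in $W_i(0)$, giving $W_i(x(t_2))\subseteq W_i(x(t_1))$. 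The point $p$ belongs to the larger wedge but not to the smaller one (since $p\in W_{i{+}n}(x(t_2))$), so $|W_i(x(t_2))\cap S|\le|W_i(x(t_1))\cap S|{-}1\le r$, placing $x(t_2)$ in $\bigw_i^{<r}$; together with $x(t_2)\in\bigc'_{i{+}n}(r)\cap\bigt\subseteq\bigw_{i{+}n}^{<r}\cap\bigt$, the desired intersection is non-empty.

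The main obstacle I expect is the borderline case where $|W_i(x(t_1))\cap S|=r{+}1$ (the vertex regime allowed by Observation~\ref{obs:rorrp1}), since then the chain only yields $|W_i(x(t_2))\cap S|\le r$ and $x(t_2)$ could conceivably sit in the interior of the level set $\{x:|W_i(x)\cap S|=r\}$ rather than on $\bigc_i(r)$. The remedy is to rerun the symmetric argument with the roles of $(i,t_1)$ and $(i{+}n,t_2)$ swapped, yielding $|W_{i{+}n}(x(t_1))\cap S|\le r$ and the alternate witness $x(t_1)$; any residual doubly-degenerate configuration is handled by a small perturbation into the interior of both regions, permitted by the general position assumption.
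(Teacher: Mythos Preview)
Your first paragraph, establishing $m\le 2$ with antipodal types, follows the same logic as the paper's opening sentence, only spelled out in more detail; that part is fine.

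Where you diverge is in identifying the index $i$. The paper does not try to manufacture a point of $\bigw_i^{<r}\cap\bigw_{i{+}n}^{<r}\cap\bigt$ from $x(t_1),x(t_2)$. Instead it argues by a global case split: if no antipodal pair of regions meets inside $\bigt$, then the second case in the proof of Lemma~\ref{lem:oneinterval} goes through verbatim even when $type(t')=type(t){+}n$ (since that case only used the non-intersection supplied by Lemma~\ref{lem:antipodal}), so $I(p)$ is a single interval; if the unique pair $\{i,i{+}n\}$ does meet, the paper takes $p$ in that intersection and shows by another appeal to Lemma~\ref{lem:oneinterval} that $I_1\subset[i,i{+}1)$ and $I_2\subset[i{+}n,i{+}n{+}1)$. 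Your constructive route via central symmetry and cone addition is a nice alternative idea.

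There is, however, a genuine gap in your construction. You write ``$x(t_2)\in\bigc'_{i{+}n}(r)\cap\bigt$'' and implicitly use $|W_i(x(t_1))\cap S|\le r{+}1$ from Observation~\ref{obs:rorrp1}; both require that $x(t_1),x(t_2)$ actually lie on the level curves $\bigc'_i(r),\bigc'_{i{+}n}(r)$. But the parametrisation in Section~\ref{sec:wedges} places $x(t)$ on $\bigc'_{type(t)}(r)$ only when that curve is nonempty; otherwise $x(t)$ is defined by cases A/B/C and need not sit on any level curve, so Observation~\ref{obs:rorrp1} does not apply and your inequality chain collapses. You would need to argue separately that $m=2$ with types $i,i{+}n$ forces both $\bigc'_i(r)$ and $\bigc'_{i{+}n}(r)$ to be nonempty (equivalently, that if $\bigc'_i(r)=\emptyset$ then $\bigw_i^{<r}\cap\bigt=\emptyset$ and the paper's extended Lemma~\ref{lem:oneinterval} already forces a single interval). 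A second, smaller issue: even in the nondegenerate case your chain yields only $|W_i(x(t_2))\cap S|\le r$, which puts $x(t_2)$ in $\bigw_i^{\le r}$, not $\bigw_i^{<r}$; the ``small perturbation'' escape you sketch needs the perturbed $t_1$ to remain in $I_1$, which deserves one more line of justification.
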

\begin{proof}
From Lemma~\ref{lem:oneinterval}, $I(p)$ cannot consist of more than two intervals, since otherwise we can find two points $t$ and $t'$ satisfying the conditions of Lemma~\ref{lem:oneinterval} in two distinct intervals.

Now first suppose  that no pair $\{\bigw_i^{<r},\bigw_{i{+}n}^{<r}\}$ intersect in $\bigt$. Then again the statement is a direct consequence of Lemma~\ref{lem:oneinterval}. Otherwise suppose that $p\in \bigw_i^{<r}\cap \bigw_{i{+}n}^{<r}$. Then we must show that $I_{1}(p)\subset [i,i{+}1)$ and $I_{2}(p)\subset [i{+}n,i{+}n{+}1)$. For contradiction, let $i{+}1$ be contained strictly in the interior of $I_{1}(p)$. Then there are again two points $t\in I_{1}(p)\cap [i{+}1,i{+}2)$ and $t'\in I_{2}(p)\cap [i{+}n,i{+}n{+}1)$ satisfying the conditions of Lemma~\ref{lem:oneinterval}, a contradiction.
\end{proof}

\section{Coloring}
\label{sec:coloring}

We give an algorithm for coloring the points with $k$ colors so that all wedges $\{ W(t) : t\in [0,2n)\}$ contain all $k$ colors. In the following, we say that a point $p\in S$ {\em covers} a point $t\in [0,2n)$ whenever $t\in I (p)$. We proceed by iteratively removing a covering of $[0,2n)$, that is, a subset of $S$, the elements of which collectively cover the circle $[0,2n)$. We use a greedy algorithm to select such a subset; we iteratively expand the cover for $\left[0,t\right)$, by selecting a new point that covers the largest interval starting from $t$. Every point in a cover is assigned the same color. By repeating this $k$ times, we ensure that all $k$ colors are represented in each of the wedges $W(t)$, and thus by Lemma~\ref{lem:witnesses}, in all wedges containing at least $3r{+}5$ points. The key property of the algorithm is that it only requires $r=O(k)$.

A formal description of the algorithm follows. We suppose, without loss of generality, that only the pair $\{\bigw_{0}^{<r}, \bigw_{n}^{<r}\}$ may intersect in $\bigt$.\medskip

\noindent \begin{minipage}{\textwidth}
\noindent {\bf Coloring Algorithm}\\

\noindent for $i\gets 1$ to $k$ do:
\begin{enumerate}
  \item $x\gets 0$, $S'\gets \emptyset$
  \item while $\bigcup_{p\in S'} I(p)\not= [0,2n)$ do:
  \begin{enumerate}
     \item find $p\in S 
     $ such that $y(p) := \max_{t\in [0,2n)} \{ t{-}x : [x,t]\subseteq I(p) \}$ is maximized
     \item $S'\gets S' \cup \{ p \}$
     \item $x\gets x {+} y(p)$
  \end{enumerate}
  \item assign color $i$ to all points in $S'$
  \item $S\gets S \setminus S'$
\end{enumerate}
~
\end{minipage}

When every set $I(p)$ is a simple interval, this algorithm greedily colors circular arcs. The following lemma states that in that case, no point on a circle is covered more than a constant number of times per iteration (see Figure~\ref{fig:arcs}).

\begin{figure}[htb]
\begin{center}
\includegraphics[scale=.4, angle=-90]{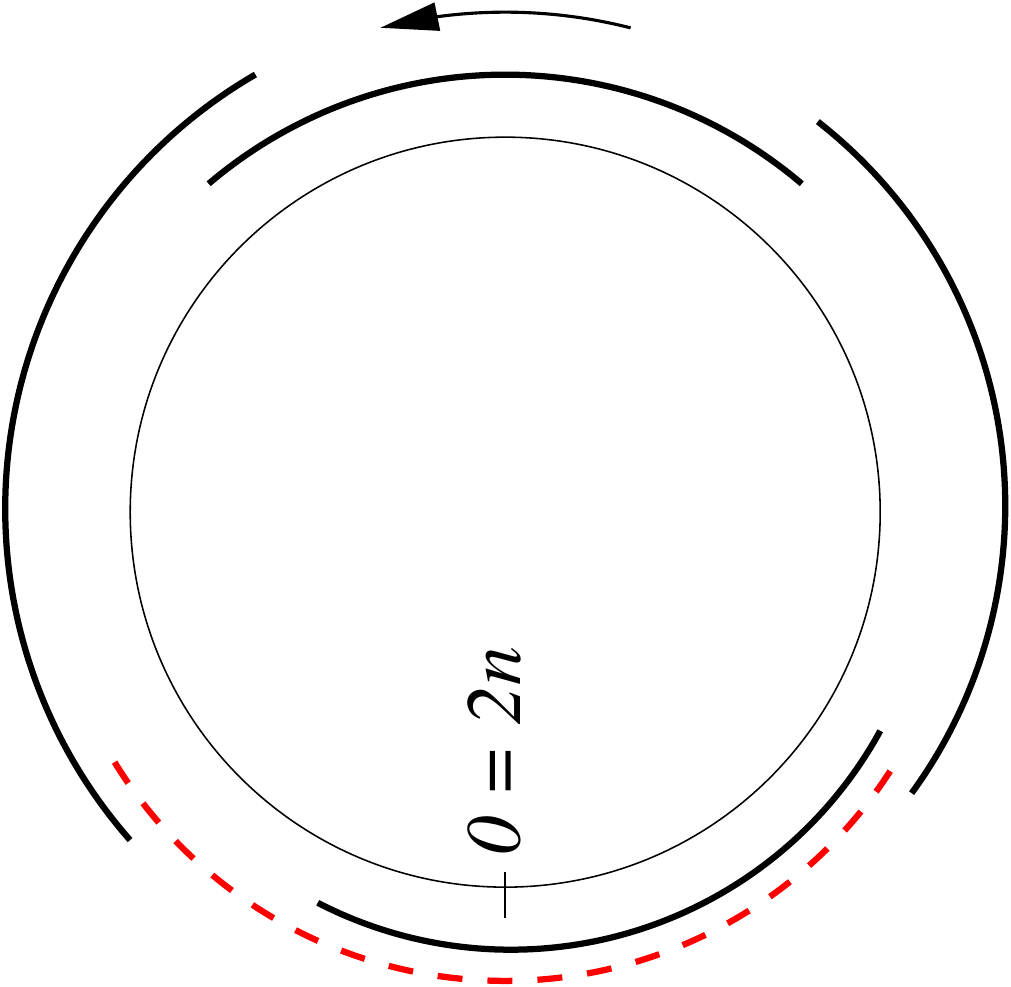}
\end{center}
\caption{\label{fig:arcs}Covering the circle $[0,2n)$ by circular arcs.}
\end{figure}

\begin{lemma}
\label{lem:coloring}
Suppose that no pair $\{\bigw_i^{<r}, \bigw_{i{+}n}^{<r}\}$ intersect in $\bigt$, and that there are enough points to perform $j$ iterations of the coloring algorithm. Let $V$ be the set of points colored by the algorithm after the iteration $j$. Then every point of $[0, 2n)$ is covered at most $3j$ times by points of $V$.
\end{lemma}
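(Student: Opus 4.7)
\medskip\noindent\textbf{Proof plan.} By the hypothesis and Corollary~\ref{cor:interval}, every $I(p)$ is a single arc on the circle $[0,2n)$, so the inner loop of the Coloring Algorithm is ordinary greedy circular arc cover. My plan is to show that in a single execution of that inner loop, the selected arcs cover each point of $[0,2n)$ at most three times; summing this bound over the $j$ iterations then yields $3j$.

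Fix one iteration and let $I(p_1),\ldots,I(p_\ell)$ be the arcs chosen in order, with associated positions $x_1=0<x_2<\cdots<x_{\ell+1}$ obtained by unrolling the circle onto $\R$, so that $x_{s+1}$ is the counterclockwise endpoint of $I(p_s)$ viewed in $\R$ and $x_{\ell+1}\geq 2n$ marks the completion of coverage. Represent each $I(p_s)$ as a linear interval $[a_s,b_s]\subseteq\R$ with $b_s=x_{s+1}$, $a_s\leq x_s$, and $b_s-a_s\leq 2n$. Three observations drive the rest of the argument: \emph{(a)} $b_1<b_2<\cdots<b_\ell$ (strict greedy progress); \emph{(b)} $a_s>0$ for all $s\geq 2$, for otherwise $I(p_s)$ would contain $x_1=0$ and extend past $I(p_1)$ by \emph{(a)}, contradicting the greedy choice at step~$1$; \emph{(c)} only $I(p_\ell)$ can satisfy $b_s>2n$, because the loop terminates as soon as $b_s - a_1 \geq 2n$, and moreover $b_\ell - a_1 < 4n$ since $b_{\ell-1}-a_1<2n$ and $b_\ell - b_{\ell-1}\leq b_\ell - a_\ell \leq 2n$.

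Now fix $t\in[0,2n)$. An arc $I(p_s)$ covers $t$ circularly precisely when $t+2nk\in[a_s,b_s]$ for some $k\in\{-1,0,1\}$. For the ``linear'' case $k=0$: if two arcs $I(p_a),I(p_b)$ with $b-a\geq 2$ both contained $t$ in this way, then $a_b\leq t\leq b_a = x_{a+1}$ would force $I(p_b)$ to contain $x_{a+1}$, while \emph{(a)} gives $b_b>b_{a+1}$; the greedy would then have chosen $I(p_b)$ over $I(p_{a+1})$ at step $a+1$, a contradiction. So at most two (necessarily consecutive) arcs cover $t$ linearly. For the ``wrap'' cases $k=\pm 1$, only $I(p_1)$ (via $k=-1$, by \emph{(b)}) or $I(p_\ell)$ (via $k=+1$, by \emph{(c)}) can contribute; if both did so for the same $t$, we would need $a_1+2n\leq t\leq b_\ell-2n$, hence $b_\ell-a_1\geq 4n$, contradicting \emph{(c)}. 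Thus at most one wrap contribution occurs, for a total of at most $2+1=3$ selected arcs covering $t$ per iteration. Summing over the $j$ iterations yields the desired bound $3j$, and the main difficulty is indeed observation \emph{(c)}, which simultaneously limits the number of wrap-around arcs and rules out both wraps contributing to the same position $t$.
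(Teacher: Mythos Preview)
Your proof is correct and follows essentially the same approach as the paper: both reduce to showing that a single greedy pass selects arcs covering each circle point at most three times, then sum over $j$ iterations. The only difference is in how the ``$3$'' is accounted for. The paper removes the \emph{last} selected arc $I(p_\ell)$, observes that the remaining arcs form a greedy cover of a proper arc of the circle (hence, by the standard line-interval greedy argument, overlap at most twice), and then puts $I(p_\ell)$ back for the third hit. You instead unroll onto $\R$ and split contributions into ``linear'' ($k=0$, at most two, necessarily consecutive) and ``wrap'' ($k=\pm 1$, at most one, coming from either $p_1$ or $p_\ell$ but not both). Your bookkeeping via observations \emph{(a)}--\emph{(c)} is more explicit than the paper's, but the underlying idea is the same.
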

\begin{proof}
It is sufficient to prove that no point of $[0,2n)$ is covered more than three times by points of $S'$.
Note that if no pair of curves intersect, then from Corollary~\ref{cor:interval}, every set $I(p)$ is an interval. Hence $S'$ is a greedy covering of the circle by intervals (i.e., circular arcs).

Let $I(p)$ be the last interval chosen by the algorithm, and consider $S'' := S'{-}\{p\}$. Suppose that a point $t$ is covered by more than two points of $S''$. Let $a$ and $b$ be the first and the last points chosen, respectively, that cover $t$. The remaining intervals that cover $t$ either extend further than $I(b)$ and should have been chosen instead of $I(b)$, or do not extend further than $I(b)$, in which case $I(b)$ should have been chosen instead. In both cases, we have a contradiction. Hence the points of $S''$ do not cover any point of $[0,2n)$ more than twice. The last interval $I(p)$ can cover some points of the circle a third time. Therefore, every point of $[0,2n)$ is covered at most three times by points of $S'$.
\end{proof}

In the general case, a point $p$ might correspond to two intervals on opposite regions of the circle $[0,2n)$. We show that the following similar property holds.
\begin{lemma}
\label{lem:gen-coloring}
Suppose that there are enough points to perform $j$ iterations of the above algorithm, and let $V$ be the set of points colored by the algorithm after the iteration $j$. Then every point of $[0, 2n)$ is covered at most $6j$ times by points of $V$.
\end{lemma}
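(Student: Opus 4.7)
The plan is to adapt the argument of Lemma~\ref{lem:coloring}, absorbing an extra factor of two that accounts for the possibility that a single point $p \in S'$ may contribute two disjoint intervals to the cover. As in Lemma~\ref{lem:coloring}, it suffices to show that one iteration of the algorithm adds at most $6$ to the coverage at any fixed $t \in [0,2n)$; the bound $6j$ then follows by induction on the iteration count.

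Fix $t$ and one iteration, and consider those $p \in S'$ with $t \in I(p)$. Because the two components of any two-interval set $I(p)$ are disjoint (one in $[0,1)$ and one in $[n,n{+}1)$), exactly one component of $I(p)$ contains $t$ for each such $p$. Split the contributors into a \emph{main} group $M_t$, consisting of the $p$'s whose component containing $t$ is the one the greedy step actually used to advance $x$ at the moment $p$ was selected, and an \emph{aux} group $A_t$ consisting of the remaining $p$'s --- these are necessarily two-interval points whose ``free'' component happens to contain $t$. By the very definition of the algorithm, the main components selected over the iteration form a greedy cover of $[0,2n)$: at every step the main is the component of $I(p)$ realizing the maximum $y(p)$. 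Consequently, the exchange argument from the proof of Lemma~\ref{lem:coloring} applies verbatim and gives $|M_t| \le 3$.

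The remaining task is to show $|A_t| \le 3$. Without loss of generality $t \in [0,1)$, so by Corollary~\ref{cor:interval} every $p \in A_t$ lies in $\bigw_0^{<r} \cap \bigw_n^{<r}$, with aux $I_1(p) \subseteq [0,1)$ containing $t$ and main $I_2(p) \subseteq [n,n{+}1)$. Since $t \in I_1(p)$ is equivalent to $p \in W_0(x(t))$, the entire aux group lies in the single type-$0$ wedge $W_0(x(t))$. I would then run a second exchange argument on the mains $\{I_2(p) : p \in A_t\}$, which live in $[n,n{+}1)$ and are drawn from the $T_2$-subsequence of greedy picks: if four members $p_1,\dots,p_4 \in A_t$ existed, the shared-wedge constraint forces enough structural overlap among their $I_2$-mains that, at the pick time of the latest one, an earlier or later competitor has strictly larger $y$-value than the one chosen, contradicting greediness.

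Putting the two bounds together yields coverage at most $3{+}3 = 6$ per iteration, and induction on $j$ proves the lemma. The main obstacle is making the second exchange precise: one must translate the geometric constraint ``all of $A_t$ lies in a common wedge $W_0(x(t))$'' into a combinatorial statement about the mutual overlap pattern of the $I_2(p)$'s in $[n,n{+}1)$ that is strong enough to drive a Lemma~\ref{lem:coloring}-style contradiction. This is exactly where the machinery developed in Section~\ref{sec:truncation} --- in particular the antipodal-only intersection property (Lemmas~\ref{lem:antipodal} and~\ref{lem:oneinter}) and the fact that $\bigc'_0(r)$ and $\bigc'_n(r)$ are the only curves whose regions can meet in $\bigt$ --- enters, since it guarantees that the two-interval points behave symmetrically with respect to the pair $\{0,n\}$ and lets us apply the exchange argument within a controlled subfamily.
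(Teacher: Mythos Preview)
Your bound $|M_t|\le 3$ is fine: the main components are exactly the intervals the greedy uses to advance, and the exchange argument from Lemma~\ref{lem:coloring} carries over verbatim.

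The gap is in $|A_t|\le 3$. You only gesture at a ``second exchange argument'' and assert that the shared-wedge constraint ``forces enough structural overlap'' among the $I_2$-mains, but you never prove this, and in fact the constraint $A_t\subseteq W_0(x(t))$ says nothing about how the intervals $I_2(p)$ overlap on $\bigc'_n(r)$. Points lying in a common $0$-wedge can have pairwise nearly disjoint $I_2$-intervals, and the greedy sweep over $[n,n{+}1)$ may be forced to pick many of them consecutively. There is no exchange mechanism here: for two points $p,p'\in A_t$ the competitor at the moment $p'$ was picked need not be $p$ or any other $A_t$-point.

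More fundamentally, the paper's proof does \emph{not} show that a single iteration adds at most~$6$ covers. It shows that the \emph{total} after iteration $j{+}1$ is at most $6(j{+}1)$, via a counting argument that intertwines the level parameter $r$ with the induction hypothesis. Concretely: for $t\in[n,n{+}1)$ covered $\tau$ times so far, one locates a rectangle $R\subseteq W_n(x(t))$ and uses the induction hypothesis at a point of $\bigc'_0(r)$ to show $R$ contains at least $r-6j$ uncolored points; since $|W_n(x(t))\cap S|\le r{+}1$, at most $6j{+}1-\tau$ uncolored points of $W_n(x(t))$ lie outside $R$, and (together with two more) these are the only selections in the current pass that can cover $t$. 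The new total is $\tau+(6j{+}3-\tau)=6j{+}3$, and a symmetric pass over $[n,n{+}1)$ brings it to $6(j{+}1)$. Note the increment in one iteration is bounded by $6j{+}3-\tau$, which can far exceed~$3$ when $\tau$ is small; the induction closes only because the $\tau$'s cancel. Your decoupled per-iteration strategy discards exactly this cancellation, and without bringing $r$ into the argument there is no visible way to cap $|A_t|$ by a constant.
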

\begin{proof}
We consider that $\bigw_0^{<r}$ and $\bigw_n^{<r}$ intersect in $\bigt$. Otherwise, the statement is implied by Lemma~\ref{lem:coloring}. By Lemma~\ref{lem:oneinter}, only one such pair can intersect. Without loss of generality, we also assume that $\bigc'_0$ and $\bigc'_n$ are both orthogonal staircases going from top left to bottom right. This setting can always be enforced by symmetry and affine transformation of the points. We assume that $\bigc'_0$ is, at some point, above $\bigc'_n$, which might cause a point between the two curves to generate one interval on each (see Figure~\ref{fig:twointervals}).

We will prove our statement by induction on the number of iterations. Let us show that after the iteration $(j{+}1)$, no point of $[0,2n)$ is covered more than $6(j{+}1)$ times. The induction hypothesis is that this is true for the iterations 0 to $j$, where iteration $0$ corresponds to the initial situation. The base case $j=0$ is trivial.

Consider a point $t\in [n, n{+}1)$, and the corresponding point $x=x(t)$ on  $\bigc'_n$. Suppose that this point was covered $\tau$ times in the previous iterations (thus by points of colors 1 to $j$). By the induction hypothesis, $\tau \leq 6j$.
We consider the set of points $S'$ selected by the algorithm at the iteration $(j{+}1)$. The sets $I(p)$ start by covering the wedges of type 0, corresponding to points on $\bigc'_0$. Let $p$ be the first point of $S'$, in order of selection, that also covers $t$. By Corollary~\ref{cor:interval}, $p$ only covers two types of wedges, 0 and $n$. Let $q$ be the horizontal projection of $p$ on $\bigc'_0$.

Let $p'$ be the next point selected by the greedy algorithm. If it covers the point $1$, then from Corollary~\ref{cor:interval}, it cannot cover any point on $\bigc'_n$. Otherwise, since the algorithm is greedy, the point is associated with an interval that intersects $I(p)$ and that has the farthest right endpoint. Geometrically, $p'$ is the lowest point to the left of the vertical line $\ell$ through $q$. Let $z$ be the projection of $x$ onto $\ell$.
   
Two cases can occur. First, if $p'$ is below $x$, then $t$ is covered at most once, by $p$.
On the other hand, if $p'$ is above $x=x(t)$, then $p'$ covers $t$. 

By the  induction hypothesis,  $W_0 (q)$ contains at most $6j$ colored points. By Observation~\ref{obs:rorrp1} and since $q\in\bigc'_0$,  $W_0 (q)$ contains at least $r$ points. Hence $W_0 (q)$ contains at least $r{-}6j$ uncolored points (including $p$ and $p'$). Also, since the algorithm is greedy, $W_0 (p)$ and $W_0(z)$ do not contain  uncolored points, otherwise they would have been selected by the algorithm. Hence the orthogonal rectangle $R$ with opposite vertices $p$ and $z$  contains at least $r{-}6j$ uncolored points (see Figure~\ref{fig:proof}).

\begin{figure}[htb]
\begin{center}
\subfigure[\label{fig:twointervals}The point $p$ is associated with two intervals $I_1(p)\subseteq [0,1)$, and $I_2(p)\subseteq [n,n{+}1)$.]{\includegraphics[angle=-90, scale=.5]{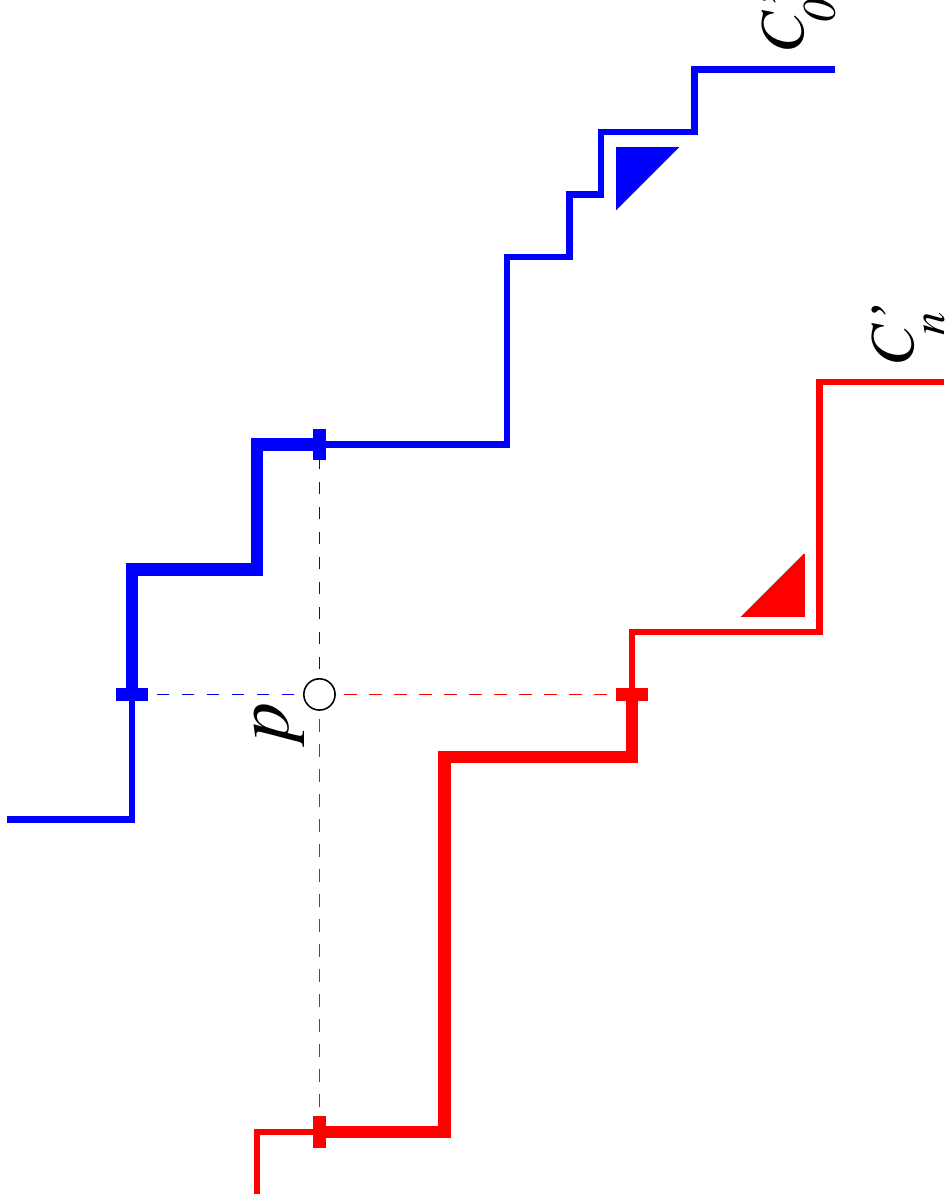}}
\hspace{1.5cm}
\subfigure[\label{fig:proof}Definition of the region $R$.]{\includegraphics[angle=-90, scale=.5]{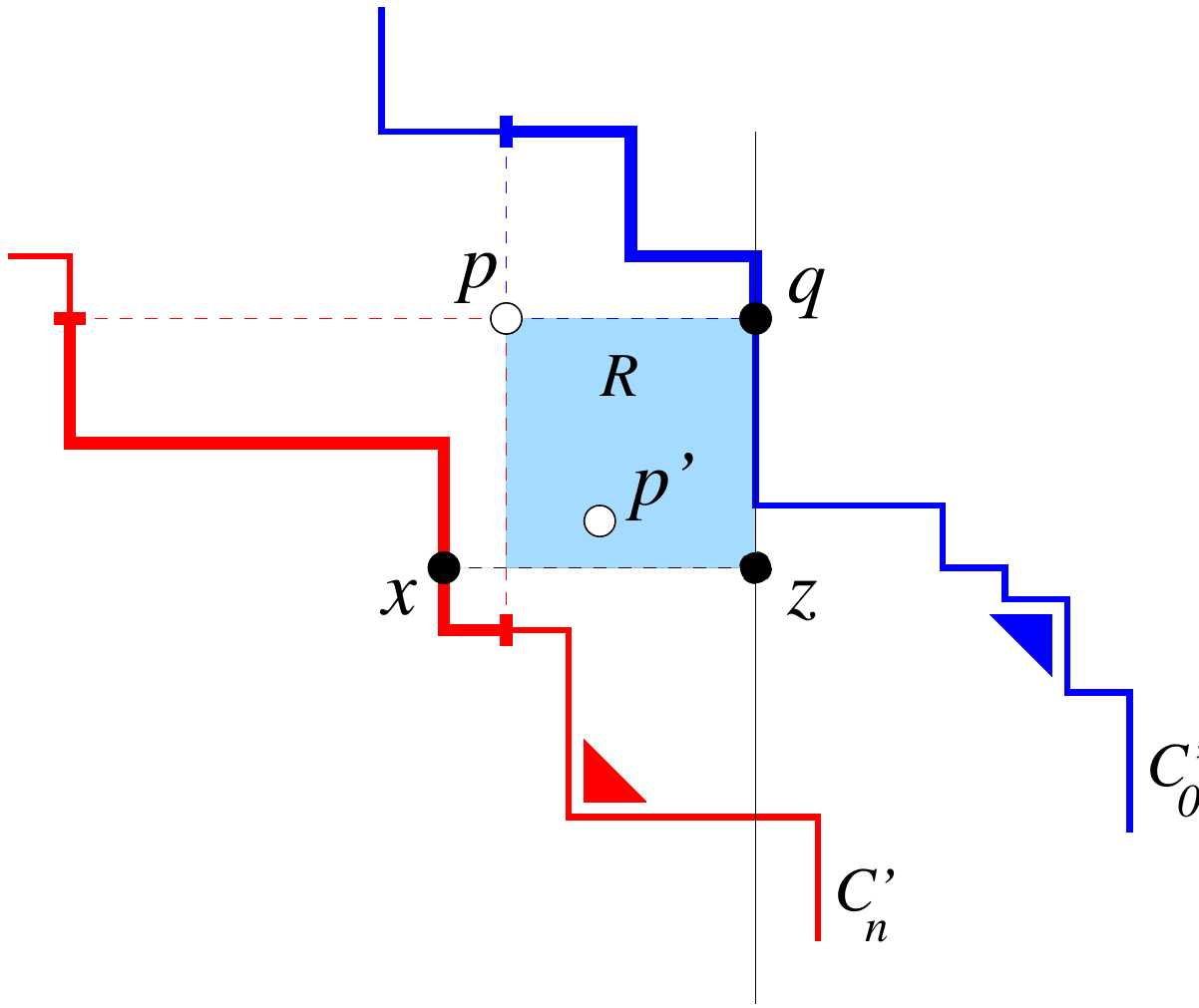}}
\end{center}
\caption{Illustration of the proof of Lemma~\ref{lem:gen-coloring}.}
\end{figure}

Since $t$ is covered $\tau$ times and $x=x(t)\in\bigc'_n$,  $W_n (x)$ can contain at most $r{+}1{-}\tau$ uncolored points. $R$ is included in $W_n(x)$, thus from the previous observation on $R$, there are at most $(r {+} 1 {-} \tau) {-} (r {-} 6j) = 6j {+} 1 {-} \tau$ uncolored points that are both to the right of $\ell$ and above $x$. These, together with $p$ and $p'$, are the only points that may cover $t$ after we have covered the interval $[0,1)$. Hence after we have covered the interval $[0,1)$, the points in $[n,n{+}1)$ cannot be covered more than $\tau {+} (6j {+} 1 {-} \tau) {+} 2 = 6j {+} 3$ times. On the other hand, the points in $[0,1)$ cannot be covered more than $6j{+}3$ times.

A similar reasoning holds when the algorithm starts to cover points in the interval $[n,n{+}1)$. We have to replace $6j$ by $6j{+}3$, since the points on both sides can already be covered $6j{+}3$ times. Thus after the  iteration $(j{{+}}1)$, no point is covered more than $6j{+}3{+}3 = 6(j{{+}}1)$ times, which concludes the proof.
\end{proof}

\begin{corollary}
For $r\geq 6k$, the coloring algorithm finds a $k$-coloring of the points in $S$ such that all wedges $W(t)$ for $t\in [0,2n)$ contain all $k$ colors.
\end{corollary}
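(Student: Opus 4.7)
The plan is a short induction on the iteration index, showing that under the hypothesis $r\ge 6k$ each of the $k$ passes of the coloring algorithm terminates with a set $S'$ whose intervals cover $[0,2n)$. Once this is in hand, the $i$-th iteration produces a monochromatic class of color $i$ such that every $t\in[0,2n)$ lies in $I(p)$ for some color-$i$ point $p$, which is exactly to say that every witness wedge $W(t)$ contains a point of color $i$.

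The first ingredient is a uniform lower bound $|W(t)\cap S|\ge r$ for every $t\in[0,2n)$, which follows from the construction in Section~\ref{sec:wedges}. When $\bigc'_{\lfloor t\rfloor}(r)$ is nonempty, $x(t)$ lies on this level curve and Observation~\ref{obs:rorrp1} applies. When $\bigc'_{\lfloor t\rfloor}(r)$ is empty, cases A, B and C are handled as in the proof of Lemma~\ref{lem:witnesses}: case C uses Observation~\ref{obs:witnessvi} directly, while cases A and B use that $\ell_{a_j}$ (resp.\ $\ell_{b_j}$) has $2r+3$ points on its left together with the emptiness of $\bigc'_i(r)$ to conclude $|W(i)\cap S|\ge r$.

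The inductive step then combines this bound with Lemma~\ref{lem:gen-coloring}. At the start of iteration $j+1$, with $0\le j\le k-1$, Lemma~\ref{lem:gen-coloring} guarantees that every point of $[0,2n)$, and hence every wedge $W(t)$, is covered by at most $6j\le 6(k-1)$ already-colored points. Together with $|W(t)\cap S|\ge r\ge 6k$ this leaves at least $r-6(k-1)\ge 6$ uncolored points in each $W(t)$. Consequently, at any current position $x$ of the while-loop some uncolored $p$ satisfies $x\in I(p)$, and (by choosing $p$ appropriately among the several candidates) $I(p)$ extends strictly past $x$, so the greedy step advances $x$ by a positive amount. Since $S$ is finite, the loop halts with $\bigcup_{p\in S'}I(p)=[0,2n)$, which finishes the induction and yields the desired $k$ color classes.

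The main obstacle is really just the first ingredient: confirming that $|W(t)\cap S|\ge r$ even in the degenerate cases A, B, C where $\bigc'_i(r)$ is empty, since this is what makes Lemma~\ref{lem:gen-coloring} usable inductively. Once this bound is in place, the rest is routine: plug Lemma~\ref{lem:gen-coloring} into an induction, observe that the greedy step cannot stall while $6k-6(k-1)=6$ uncolored points remain in every witness wedge, and conclude that $k$ disjoint covers of $[0,2n)$ are produced, one per color.
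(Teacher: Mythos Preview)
Your proposal is correct and takes essentially the same approach the paper intends: the corollary is meant to follow directly from Lemma~\ref{lem:gen-coloring} together with the fact that every witness wedge $W(t)$ contains at least $r$ points (which you verify case by case via Observations~\ref{obs:rorrp1} and~\ref{obs:witnessvi}), so that after $j<k$ iterations at least $r-6j\ge 6$ uncolored points remain in each $W(t)$ and the greedy pass cannot stall. The paper leaves the corollary without proof; you have simply filled in the routine induction, and your only slight detour is in cases~A and~B, where invoking Observation~\ref{obs:witnessvi} directly (since $x(t)\in\bigt$ and $\bigc'_i(r)$ is empty) already gives $|W(t)\cap S|\ge r$ without appealing to the $2r{+}3$ count on $\ell_{a_j}$.
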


Note that with this choice of $r$, by the well-known center point theorem, $\bigt$  is never empty. By Lemma~\ref{lem:witnesses}, this concludes the proof of Theorem~\ref{thm:rephrased}, with $\alpha_Q k \geq c_Q \times (3 \times 6k {+} 5)$, hence for any $\alpha_Q \geq 23 c_Q$. By duality, as the polygon $Q$ is symmetric, this implies the following.

\begin{theorem}
\label{thm:main}
Given a centrally symmetric convex polygon $Q$, there exists a constant $\alpha_Q$ such that every $\alpha_Q k$-fold covering of the plane by translates of $Q$ can be decomposed into $k$ coverings.
\end{theorem}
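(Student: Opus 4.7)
The plan is to derive Theorem~\ref{thm:main} from Theorem~\ref{thm:rephrased} via the point--translate duality highlighted in the ``Problem modification'' section. Because $Q$ is centrally symmetric, one has the involution $q \in Q_p \Leftrightarrow p \in Q_q$ for all $p, q \in \R^2$, which is precisely the bridge between covering by translates of $Q$ and containment of a point set by translates of $Q$.

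First I would set up the dictionary. Given an $\alpha_Q k$-fold covering $\mathcal{F}$ of the plane by translates of $Q$, let $S$ be the set of centers of the translates in $\mathcal{F}$. By the involution, for every $q \in \R^2$ the coverage multiplicity of $q$ under $\mathcal{F}$ equals $|S \cap Q_q|$. Thus the $\alpha_Q k$-fold covering hypothesis translates into the statement that every translate of $Q$ contains at least $\alpha_Q k$ points of $S$---exactly the hypothesis of Theorem~\ref{thm:rephrased}.

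Next I would apply Theorem~\ref{thm:rephrased} to obtain a $k$-coloring $\chi : S \to \{1, \ldots, k\}$ under which every translate of $Q$ containing at least $\alpha_Q k$ points of $S$ contains all $k$ colors. I transport $\chi$ back to the family by giving each translate $Q_p \in \mathcal{F}$ the color $\chi(p)$, which partitions $\mathcal{F}$ into $k$ subfamilies $\mathcal{F}_1, \ldots, \mathcal{F}_k$. To see that each $\mathcal{F}_i$ covers the plane, I fix an arbitrary $q$: by hypothesis $Q_q$ contains at least $\alpha_Q k$ points of $S$, so one of them, $p$, has color $i$; by the involution $q \in Q_p$, and $Q_p \in \mathcal{F}_i$, so $q$ is covered by $\mathcal{F}_i$.

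The main (and essentially only) obstacle is compatibility with the side hypotheses of Theorem~\ref{thm:rephrased}, namely general position and local finiteness of $S$. Local finiteness can be arranged by extracting, from any $\alpha_Q k$-fold covering, a locally finite subfamily that still covers every point at least $\alpha_Q k$ times (such a subfamily exists since the coverage multiplicity is finite everywhere and $Q$ is bounded). General position can then be restored by an arbitrarily small perturbation of the centers, which preserves the $\alpha_Q k$-fold covering property by continuity. With these cosmetic adjustments in place the reduction is immediate, and we inherit the constant $\alpha_Q = 23 c_Q$ established just before the statement of the theorem.
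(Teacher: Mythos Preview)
Your proposal is correct and follows precisely the paper's route: the paper's own proof of Theorem~\ref{thm:main} is the single clause ``By duality, as the polygon $Q$ is symmetric, this implies the following,'' invoking exactly the point--translate involution from the Problem modification section that you spell out in detail. You in fact supply more justification than the paper does (which simply \emph{assumes} local finiteness and general position rather than arguing for them), so there is nothing to add.
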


\bibliography{coverings}
\bibliographystyle{plain}

\end{document}